%
%
%
%
%
%
%
\documentclass[%
reprint,
superscriptaddress,
amsmath,amssymb,
aps,
prl,
]{revtex4-1}

\usepackage{graphicx}
\usepackage{xcolor}
\usepackage[normalem]{ulem}
\usepackage{dcolumn}
\usepackage{bm}
\usepackage{mathtools}
\usepackage[thinc]{esdiff}
\usepackage{textcomp}
\usepackage{bbold}
\usepackage{braket}
\usepackage{hyperref}
\usepackage{amsthm}
\newtheorem{theorem}{Theorem}
\newtheorem{lemma}[theorem]{Lemma}


\begin{document}


\title{Quantum spatial search in one dimension through long-range interactions}
\title{Optimal quantum spatial search with one-dimensional long-range interactions}

\author{Dylan Lewis}
 \email{dylan.lewis.19@ucl.ac.uk}
\affiliation{
 Department of Physics and Astronomy, University College London, 
 London WC1E 6BT, United Kingdom
}
\author{Asmae Benhemou}
\affiliation{
 Department of Physics and Astronomy, University College London, 
 London WC1E 6BT, United Kingdom
}
\author{Natasha Feinstein}
\affiliation{
 Department of Physics and Astronomy, University College London, 
 London WC1E 6BT, United Kingdom
}
\author{Leonardo Banchi}
\affiliation{Department of Physics and Astronomy, University of Florence, 
via G. Sansone 1, I-50019 Sesto Fiorentino (FI), Italy}
\affiliation{INFN Sezione di Firenze,  via G. Sansone 1,  I-50019 Sesto Fiorentino (FI), Italy}
\author{Sougato Bose}
\affiliation{
 Department of Physics and Astronomy, University College London, 
 London WC1E 6BT, United Kingdom
}

\date{\today}

\begin{abstract}
Continuous-time quantum walks can be used to solve the spatial search problem,
which is an essential component for many quantum algorithms that run
quadratically faster than their classical counterpart, in $\mathcal O(\sqrt n)$  time
for $n$ entries. However, the capability of models found in nature is largely unexplored -- e.g., in one dimension only nearest-neighbour Hamiltonians have been considered so far, for which 
the quadratic speedup does not exist. 
Here, we prove that optimal spatial search, namely with $\mathcal O(\sqrt n)$ 
run time and high fidelity, is possible in one-dimensional spin
chains with long-range interactions that decay as $1/r^\alpha$ with distance $r$.
In particular, near unit fidelity is achieved for $\alpha\approx 1$ and,
in the limit $n\to\infty$, we find a continuous transition from a region where 
optimal spatial search does exist ($\alpha<1.5$) to where it does not ($\alpha>1.5$).
Numerically, we show that spatial search is robust to dephasing
noise and that, for reasonable chain lengths,
$\alpha \lesssim 1.2$ should be sufficient to demonstrate
optimal spatial search experimentally with near unit fidelity.
\end{abstract}

\maketitle

\paragraph*{\label{sec:intro}Introduction.}
Spatial search is the problem of finding a marked element in a graph with $n$ nodes. For classical algorithms, there is no shortcut and $\mathcal{O}(n)$ queries are required. However, with quantum algorithms, spatial search can be solved optimally in $\mathcal{O}(\sqrt{n})$ time~\cite{Grover1996ASearch,Bennett1997StrengthsComputing}. Childs and Goldstone~\cite{Childs2004} found that spatial search can be solved by an algorithm using a continuous-time quantum walk. They showed that for the complete graph, the hypercube graph, and $d$-dimensional periodic lattices of $d>4$, the marked node can be found in optimal time. Since then, a number of graphs have been found that permit optimal spatial search~\cite{Novo2015SystematicGraphs,Chakraborty2016SpatialGraphs,Chakraborty2017OptimalNetworks,Novo2018Environment-assistedSearch, Wong2018QuantumGraphs,Osada2020Continuous-timeNetwork, Sato2020ScalingWalk}. Recently, Chakraborty et al.~\cite{Chakraborty2020OptimalityWalks} have shown necessary and sufficient conditions for optimal spatial search for graphs with a sufficiently large spectral gap.

A quantum walk is the quantum equivalent of the classical random walk. Unlike a classical random walk, the quantum walker takes a superposition of paths~\cite{Kempe2003QuantumOverview}. The interference between those paths forms the basis of the quantum algorithms that utilise quantum walks.
From a graph-theoretical perspective, 
a continuous-time quantum walk is generated by a unitary evolution defined by the adjacency matrix $A$ of a graph~\cite{Farhi1998QuantumTrees,Childs2003ExponentialWalk}. The vertices of the graph define orthonormal basis states of a Hilbert space and an evolution for time $t$ is given by $e^{-iAt}$. The latter is equivalent to the natural dynamics of a quantum system where the Hamiltonian is the adjacency matrix of the graph that defines the hopping between basis states. 

Can Hamiltonians found in nature, with interactions typically falling off with distance, admit a quantum walk capable of optimal spatial search? This is an important question in the Noisy Intermediate-Scale Quantum (NISQ) \cite{Preskill2018} era where we can look for the quantum speedup capabilities of non-error-corrected collections of qubits with physically motivated couplings. Most of the previously studied Hamiltonians that admit optimal spatial search are difficult to find in
nature and must be artificially synthesized, e.g. using quantum simulation
techniques~\cite{Georgescu2014QuantumSimulation} or a digital quantum computer.  Even a recent idea of using an unweighted long-range percolation graph~\cite{Osada2018SpatialInteractions} would amount to having stochastic interactions, whose realization is unclear.   
Childs and Ge~\cite{Childs2014SpatialLattices} also noted that an interaction strength that decays as a quadratic power law with distance would be sufficient to find optimal spatial search in $d=2$. 
For one-dimensional systems, 
only nearest-neighbour interactions have been considered, where optimal quantum 
search was shown to be impossible~\cite{Childs2004}. 

Here, we propose a physically-motivated model for spatial search on a closed one-dimensional spin chain using long-range interactions that decay as $1/r^\alpha$, with $r=|i-j|$ the distance between lattice sites $i$ and $j$, and $\alpha \ge 0$. At the moment, this is a highly topical model, realizable in ion traps \cite{Kim2009EntanglementModes, Lanyon2017EfficientSystem, Richerme2014Non-localInteractions, Friis2018ObservationSystem}, dipolar crystals~\cite{Micheli2006AMolecules,Yan2013ObservationMolecules}, Rydberg arrays~\cite{Browaeys2020Many-bodyAtoms} etc., although the tunability of $\alpha$ is probably present only in the ion trap setting. This model has been explored for its capabilities of scrambling \cite{Chen2019FiniteInteractions}, novel dynamical quantum phase transitions \cite{Zhang2017ObservationSimulator}, and quantum state transfer \cite{Eldredge2017FastInteractions,Tran2020OptimalSystems}. Yet its potential for quantum computation remains unexplored.  

In the case $\alpha=0$, we have the complete graph. This is equivalent to nearest-neighbour interactions in a spatial dimension equal to the number of spins $n$. For complete graphs, optimal spatial search has been shown~\cite{Farhi1998}. In the case of large $\alpha$, the graph approaches the one-dimensional periodic lattice, where spatial search does not exist. This picture, where long-range interactions effectively mask the dimension of the system~\cite{Maghrebi2015ContinuousSystems}, suggests there is a transition between the regime where optimal spatial search exists and where it does not. Here, we address the following questions: for which values of $\alpha$ can we show optimal search, and at what value of $\alpha$ is there a transition between the two regimes? 
We will show both numerically and analytically for large $n$ that optimal spatial search, namely with $\mathcal O(\sqrt n) $
run time, does exist for $\alpha<1.5$ and has a near perfect fidelity  for $\alpha\lesssim 1.2$.  We note that the interaction strengths found for optimal search are experimentally realizable. In particular, ion trap experiments have been performed for chains of ions with interaction strengths with a potential of $\alpha \approx 1$~\cite{Kim2009EntanglementModes, Lanyon2017EfficientSystem,Richerme2014Non-localInteractions, Friis2018ObservationSystem}. In principle, $\alpha$ can be tuned to anywhere between 0 and 3 for low $n$~\cite{Richerme2014Non-localInteractions}. However, as $n$ increases, $\alpha \ll 1$ becomes experimentally more difficult. We show that experimental designs using ion traps along these lines would be able to demonstrate optimal spatial search.

\paragraph*{\label{sec:spatial_search}Spatial search.}
A quantum search problem can be solved in $\mathcal{O}(\sqrt{n})$ time using Grover's algorithm~\cite{Grover1996ASearch}. An analog version of this search algorithm was suggested by Farhi and Gutmann~\cite{Farhi1998}, which is a continuous-time quantum walk on a complete graph~\cite{Childs2004}. It is experimentally difficult to encode this search in the general case for the entire Hilbert space, which gives a graph of size $2^n$ for $n$ spins, because it is hard to implement the continuous-time oracle Hamiltonian and the graph Hamiltonian. We therefore restrict ourselves to the spatial search problem in the single-excitation basis of size $n$, which can naturally be mapped to a physical setting. Each of the $n$ vertices of the search graph, $G$, represent a single excited spin. The basis states of this space are therefore $\ket{j} = \ket{0}_1 \otimes \dots \otimes \ket{0}_{j-1} \otimes \ket{1}_j \otimes \ket{0}_{j+1} \otimes \dots \otimes \ket{0}_n = \ket{0...010...0}$. The marked state is identified by measuring the system to locate the excited spin. The oracle Hamiltonian is simply a local magnetic field at the marked spin site, 
\begin{equation}
    H_{\textrm{marked}} = |w \rangle\langle w|,
\end{equation}
where $w\in\{1,\dots,n\}$ labels the marked vertex of the graph. The search includes the graph Hamiltonian and the marked state Hamiltonian,
\begin{equation}
    H_{\textrm{search}} = \gamma H + H_{\textrm{marked}},
\end{equation}
where the relative strength of the two Hamiltonians is given by $\gamma$, an effective hopping rate for the quantum walker between vertices of the graph. We use the original Childs and Goldstone spatial search algorithm for continuous-time quantum walks. First, a specific value of $\gamma$ is chosen. Then the system is initialized in a specific state $|s\rangle$ and evolved under the system dynamics for a time $T$. Finally, the state is measured and the marked state is found with probability $F$
\begin{equation}
		\label{F}
    F = \left| \langle w | e^{-iH_\textrm{search}T} | s \rangle \right|^2.
\end{equation}
The aim is to find an $F$ as close to 1 as possible.
The search is optimal if $T$ is $\mathcal{O}(\sqrt{n})$. The initial state is 
\begin{equation}
    |s\rangle = \frac{1}{\sqrt{n}}\sum_{j=1}^{n}|j\rangle,
\end{equation}
which has an overlap of $1/\sqrt{n}$ with the marked state. 

For spatial search, we want the system to oscillate between the states $\ket{s}$ and $\ket{w}$ with high fidelity~\cite{Byrnes2018GeneralizedStates, Cafaro2019Continuous-timeSystems}. The $\gamma$ that maximises the overlap of the dominant eigenvectors of the system with the states $\ket{s}$ and $\ket{w}$ achieves this~\cite{Childs2004}. The maximum overlap occurs at the minimum gap, that is the minimum energy difference between the ground state energy and the first excited state energy. The time it takes to oscillate between the superposition state and the marked state is proportional to the inverse of the energy gap. The minimum energy gap is proportional to $1/\sqrt{n}$, and therefore the time to reach the marked state from the superposition state is proportional to $\sqrt{n}$. This gives optimal spatial search and explains how the optimum $\gamma$ for the system can be found.

\paragraph*{\label{sec:model}Long-range interaction model.}
Long range interactions can be realized in one-dimensional ion-trap systems, where spin-spin couplings are generated through laser-induced forces that off-resonantly drive vibrational modes of the ion chain~\cite{Richerme2014Non-localInteractions, Friis2018ObservationSystem}. This interaction is well described by the XY model with Hamiltonian~\cite{Kim2009EntanglementModes, Lanyon2017EfficientSystem},
\begin{equation}
    H = \sum_{i<j} J_{ij} \left( \sigma^x_{i}\sigma^x_{j} + \sigma^y_{i}\sigma^y_{j} \right),
\end{equation}
where the interaction strength, $J_{ij}$, is dependent on the distance between the spins,
\begin{equation}
    J_{ij} =  \frac{1}{|j-i|^\alpha} + \frac{1}{|n-(j-i)|^\alpha},
\end{equation}
considering a closed one-dimensional spin chain.

\begin{figure}[ht]
    \centering
		\includegraphics[width=0.48\textwidth]{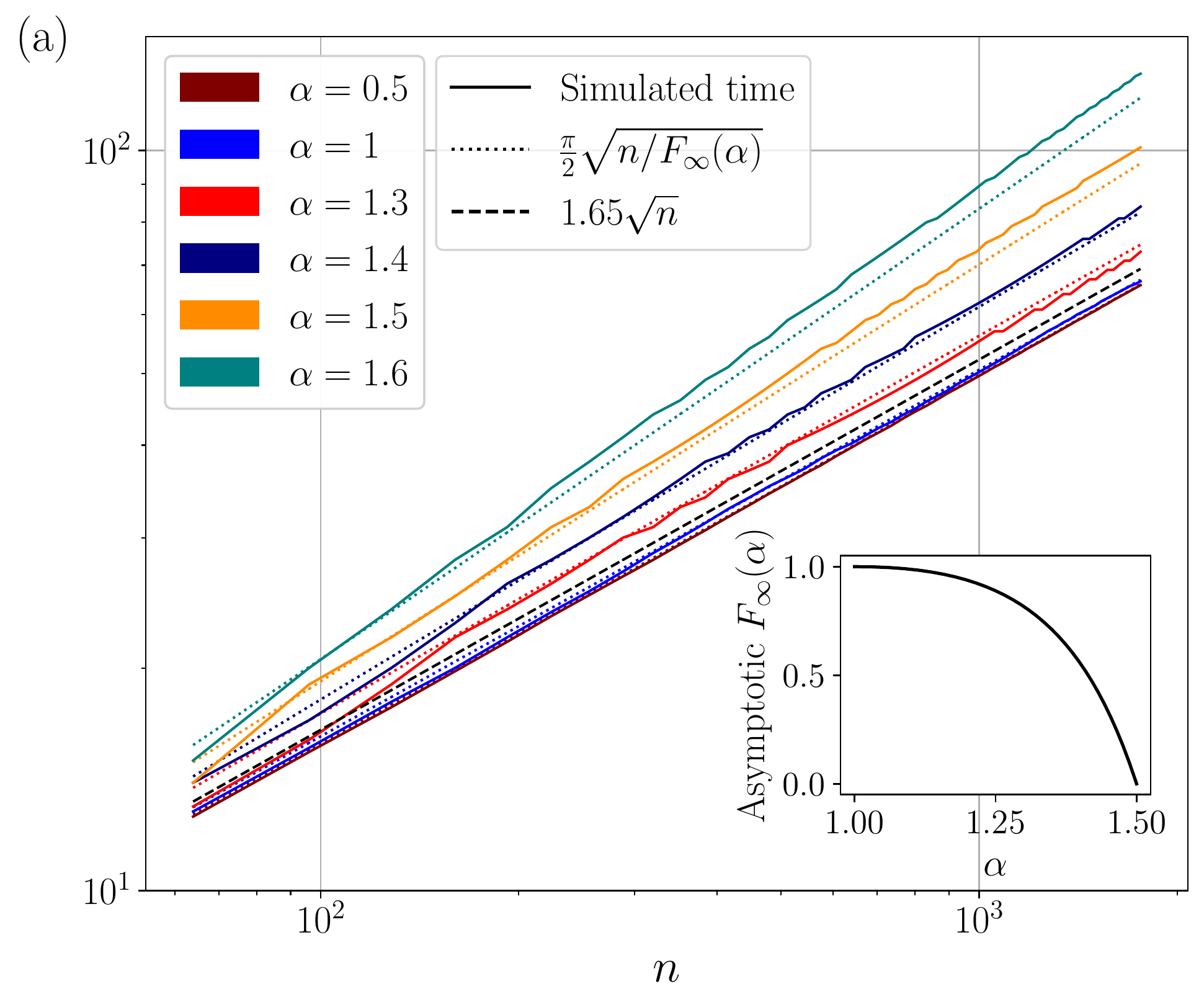}
    \includegraphics[width=0.48\textwidth]{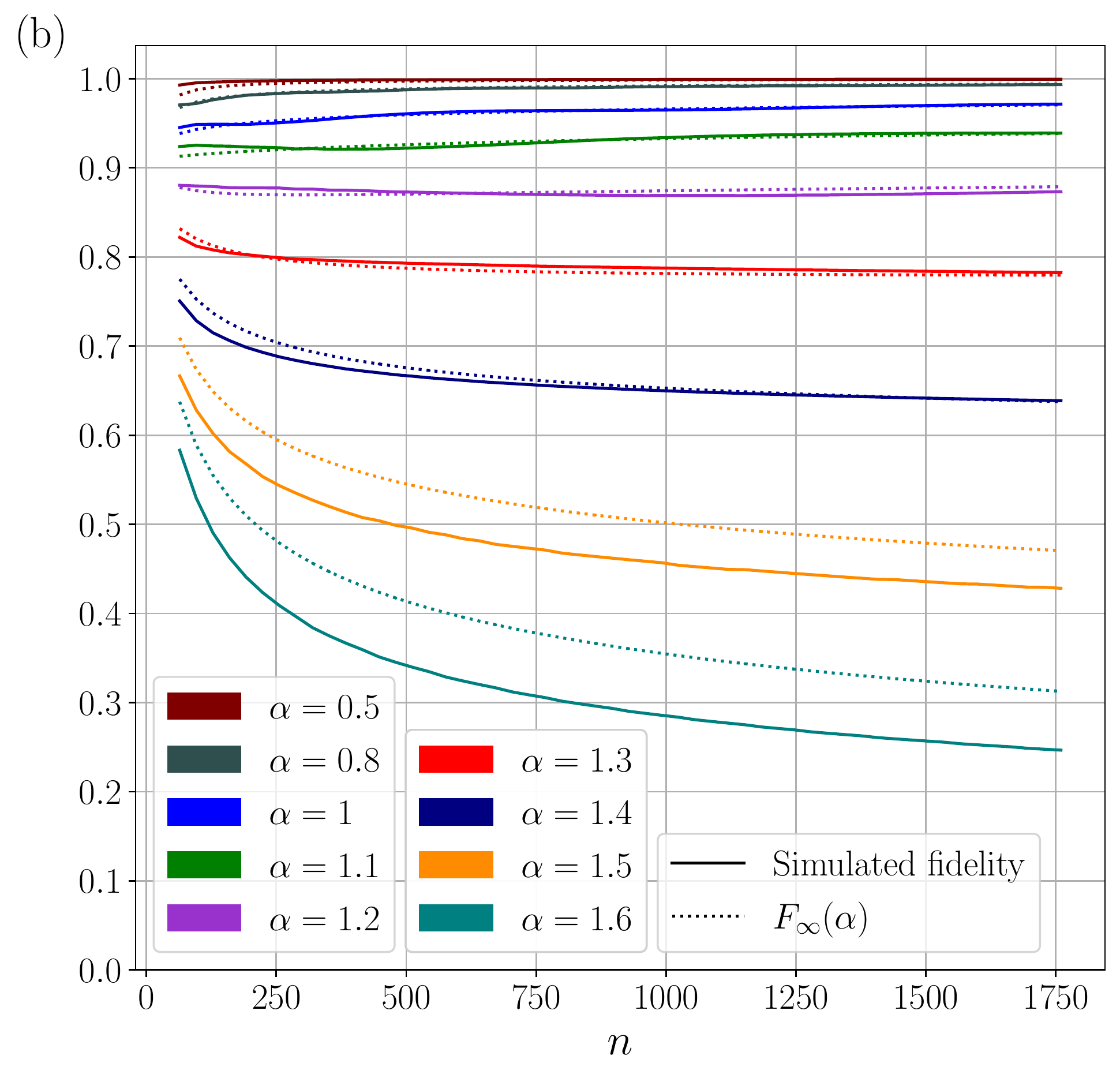}
		\caption{ 
		(a) Time to reach the maximum fidelity of the marked state against the number of spins $n$.
	  (b) Maximum fidelity of the spatial search against $n$.
		Numerical results are compared with the analytical predictions:
		$\frac \pi2\sqrt{n / F_\infty(\alpha)}$ for time, and $F_\infty(\alpha)$ for the fidelity, as defined in Eq.~\eqref{F}.
        The line (dashed black) $1.65\sqrt{n}$ is plotted in (a) as a reference for how well the time scales as $\mathcal{O}(\sqrt{n})$.
		Inset: asymptotic value of $F_\infty(\alpha)$, in the limit $n \rightarrow \infty$, as a function of $\alpha$ using the approximation of $F_\infty(\alpha)$ from Eq.~\eqref{eq:approx_nu}. (Supplementary Material contains an additional plot for time with fit parameters and a plot for fidelity that shows the asymptotic limits.)
	}
	\label{fig:timefidelity}
\end{figure}

In order to demonstrate optimal spatial search for this model, we must show that the time to reach the maximum fidelity scales $\sim \sqrt{n}$. Fig.~\ref{fig:timefidelity}(a) compares numerically simulated time to reach the maximum fidelity with an analytical prediction $\frac{\pi}{2}\sqrt{n/F_\infty(\alpha)}$, where we have analytically found an approximation of the fidelity for large n, $F_\infty(\alpha)$ (see the Supplementary Material).
We observe that the search time from numerical simulations closely follows the analytical times for $\alpha<1.5$. This implies spatial search in optimal $\mathcal{O}(\sqrt{n})$ time. 
Furthermore,
Fig.~\ref{fig:timefidelity}(b) shows that the fidelity for $\alpha < 1.5$ is closely approximated by $F_\infty(\alpha)$. 
The asymptotic value of $F_\infty(\alpha)$ is displayed in the inset of Fig.~\ref{fig:timefidelity}(a). Thus, whenever $F_\infty(\alpha)$ is asymptotically non-zero, there is optimal spatial search in $\mathcal{O}(\sqrt{n})$ time. We also note that for $\alpha\leq 1.1$, the fidelity is high (above 0.9) for low $n$ and approaches its asymptotic value relatively quickly. Even before the asymptotic value is reached, the scaling with $n$ is not significant and essentially gives optimal spatial search for all $n$. We have confirmed that these results are irrespective of the marked state chosen. 
In particular, the fidelity is high for $\alpha=1$ and reaches over 0.97 for 1760 spins, while for 
$\alpha = 1.4$ the fidelity is low, around 0.6 for $n\sim 3\times10^6$ (not shown). 

\paragraph*{\label{sec:transition}Optimal search regime.}
We apply the criterion for the optimality of quantum search from Ref.~\cite{Chakraborty2020OptimalityWalks},  
to investigate the $\alpha$ values that permit optimal spatial search. This criterion is valid when the spectral condition
\begin{equation}
    \label{eq:spectral_condition}
    \Delta \ge c n^{-\frac{1}{2}}
\end{equation}
is satisfied, where $c$ is a small positive constant, and $\Delta(\alpha) =  1 - \Tilde{\lambda}_{n-1}(\alpha)$ is a rescaled spectral gap -- $\Tilde{\lambda}_k$ are the eigenvalues of $H$ in increasing order, rescaled such that the largest eigenvalue is 1 and the smallest is 0. 
When the spectral condition is satisfied, for large $n$, the fidelity $F_\infty$ and time $T$ are related by $T \approx \frac{\pi}{2}\sqrt{n/F_\infty}$. 
Therefore, if $F_\infty(\alpha)\to 0$ asymptotically, optimal spatial search in $\mathcal O(\sqrt n)$ time is not possible.

The region where the spectral condition applies, as determined by Eq.~\eqref{eq:spectral_condition}, is dependent on $\alpha$. Asymptotically with respect to $n$, we find the spectral gap for $\alpha<3$ and $\alpha \ne 1$,
\begin{equation}
    \label{eq:gap_asymptotic}
    \Delta(\alpha) \sim 1 -  \frac{1 - \frac{g_0(\alpha)}{f(\alpha)}n^{1-\alpha}}{1 - \frac{2}{f(\alpha)} \frac{n^{1-\alpha}}{\alpha-1}},
\end{equation}
where $f(\alpha) =  4\zeta(\alpha) -2^{2-\alpha}\zeta(\alpha)$, $g_0(\alpha)  = -2^{\alpha}\pi^{\alpha-1}\sin(\tfrac{\alpha\pi}{2})\Gamma(1-\alpha)$, $\zeta(\alpha)$ is the Riemann zeta function, and $\Gamma(1-\alpha)$ is the gamma function. This is proved in the Supplementary Material. 
For $\alpha<1$, we therefore find $\Delta(\alpha) = \mathcal{O}(1)$, which satisfies the spectral condition. Using Lemma 5 from Ref.~\cite{Chakraborty2020OptimalityWalks}, $\Delta(\alpha)(1-\frac{1}{n}) \leq F_\infty(\alpha) \leq 1$. Therefore, $F_\infty(\alpha)$ must tend to a constant. This proves optimal spatial search exists for $\alpha < 1$. Numerically this fidelity tends to 1, see Fig.~\ref{fig:timefidelity}(b).

For $\alpha  > 1.5$, optimal spatial search cannot exist
due to the Lieb-Robinson bounds for long-range interactions~\cite{Lieb1972TheSystems, Tran2018LocalityInteractions, Kuwahara2020StrictlyDimensions}. These bounds give an effective light-cone
for the maximum correlation distance $r$ after time $t$.
In a one-dimensional free-particle system $t=\mathcal{O}(r^{\alpha-1})$ 
for $1<\alpha<2$, while $t=\mathcal{O}(r)$ for $\alpha\geq 2$~\cite{Tran2020HierarchyInteractions}.  
For spatial search, since the maximum distance is $r=\mathcal{O}(n)$,
we find that the time must be lower-bounded by 
$t=\mathcal{O}(n^{\alpha-1})$, thus showing that optimal spatial search 
with $t=\mathcal{O}(\sqrt n)$ is not possible for $\alpha>1.5$.

For $1<\alpha<1.5$, we find $\Delta(\alpha)=\mathcal{O}(n^{1-\alpha})$.
Combining this scaling with  Eq.~\eqref{eq:spectral_condition},
we have therefore proved that the spectral condition is asymptotically satisfied for $\alpha < 1.5$.  This is demonstrated numerically in Fig.~\ref{fig:spectral_gap_scaling}. 
Moreover, in the Supplementary Material, we find the asymptotic expansion
\begin{equation}
    \label{eq:approx_nu}
    F_\infty(\alpha) \approx \frac{2 \left(n^{\alpha-2} \zeta(\alpha -1) + \frac{2^{\alpha -2}}{2-\alpha}\right)^2 }{{n^{2\alpha-3} \zeta(2\alpha -2) + \frac{2^{2\alpha -3}}{3-2\alpha}}},
\end{equation}
and prove that $F_\infty(\alpha)$ also approaches a non-zero value for $1 < \alpha<1.5$. Therefore, optimal spatial search exists for $\alpha<1.5$, although
for $\alpha$ close to 1.5 the fidelity converges to a low value, despite being optimal.
\begin{figure}[t]
    \centering
		\includegraphics[width=0.48\textwidth]{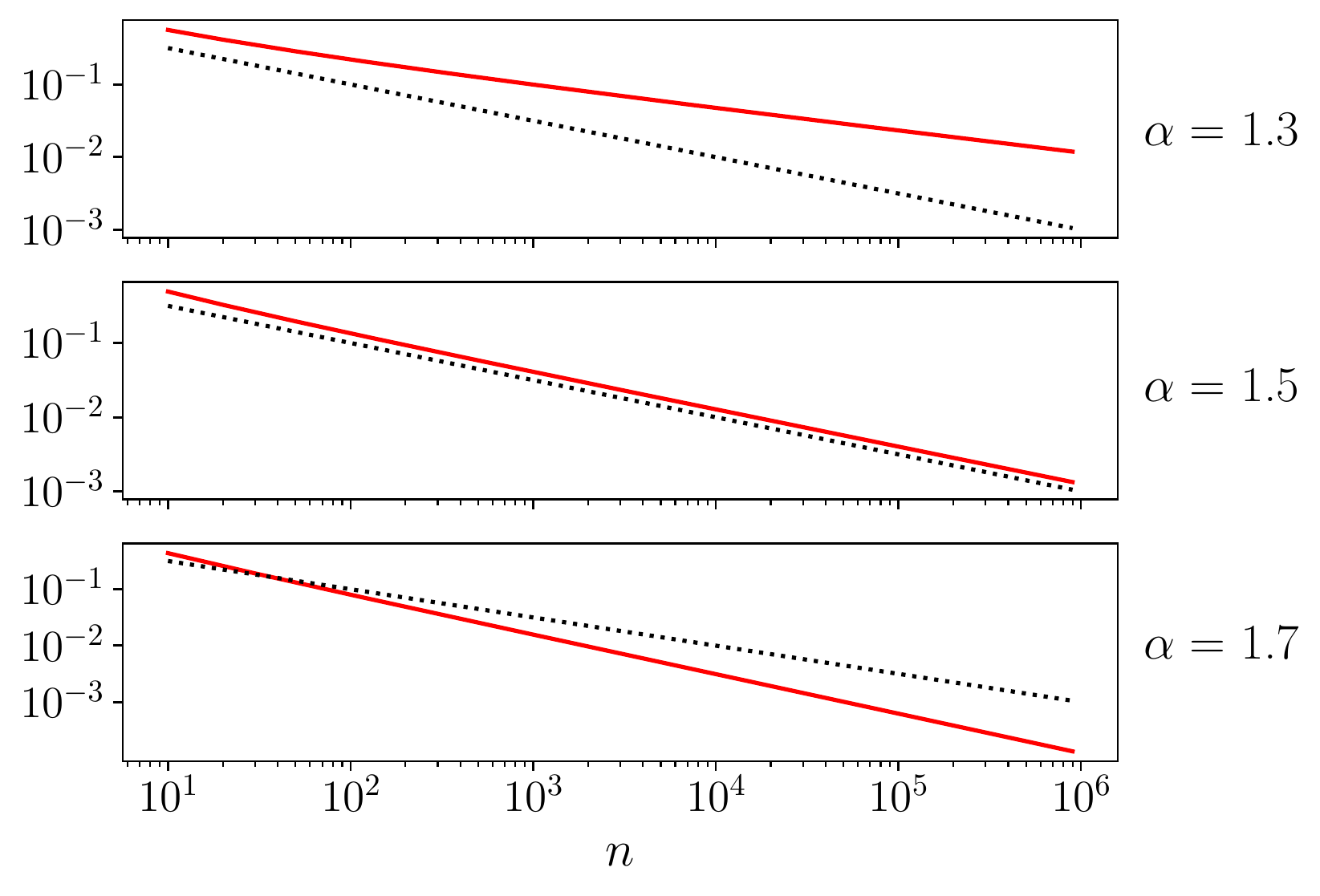}
    \caption{The spectral condition is shown by comparing $\Delta(\alpha)$ from Eq.~\eqref{eq:gap_asymptotic} (solid red) with $n^{-\frac{1}{2}}$ (dotted black) for various $\alpha$. 
	The small positive constant $c$ from Eq.~\eqref{eq:spectral_condition} has been ignored, as it provides a constant shift. 
	}
    \label{fig:spectral_gap_scaling}
\end{figure}
The inset of Fig.~\ref{fig:timefidelity}(a) illustrates the continuous transition between the two regimes where, 
asymptotically, optimal spatial search exists ($\alpha<1.5$) or is impossible ($\alpha>1.5$). 
This curve is reminiscent of the behaviour of the order parameter when the controls pass a phase transition 
point, although with a different physical explanation. 
The asymptotic fidelity $F_\infty(\alpha)$ predicts perfect search for $\alpha=1$, with a significant
decrease after
$\alpha=1.3$, before reaching 0 at $\alpha = 1.5$,
see the Supplementary Material for
details.

\paragraph*{\label{sec:noise}Dephasing noise.}
In a physical implementation, the dephasing of the qubits is the principal impediment. A pure dephasing which allows coherence to be lost without energy exchange with the environment, is the most dominant in ion traps, for example \cite{Piltz2016VersatileProcessing}. This can be modeled as random local field fluctuations by adding a noise term to the diagonal elements of the system Hamiltonian. The noise has a mean of 0 and is sampled from a Gaussian distribution, the standard deviation of which defines the noise parameter. Our results were obtained by generating $100$ such Hamiltonians, running the spatial search for each one, and averaging over the outputs. 

Fig. \ref{fig:dephasing_comp} shows the evolution of the quantum walk on a closed spin chain of $256$ spins with $\alpha=1$ at four different levels of noise. We find that the quantum walk on the ring is reasonably robust against dephasing and the maximum fidelity significantly falls when the noise level is greater than $0.02$. This corresponds to the field fluctuations being on the order of $2\%$ of the field used to differentiate the marked state. The time to reach the maximum fidelity is also only minimally decreased at this noise level. The robustness of a quantum walk on a spin chain to small fluctuations in the interactions between sites has been shown analytically in the supplementary material of Ref.~\cite{Pitsios2017PhotonicQuench}.

\begin{figure}
    \centering
    \includegraphics[scale=0.48]{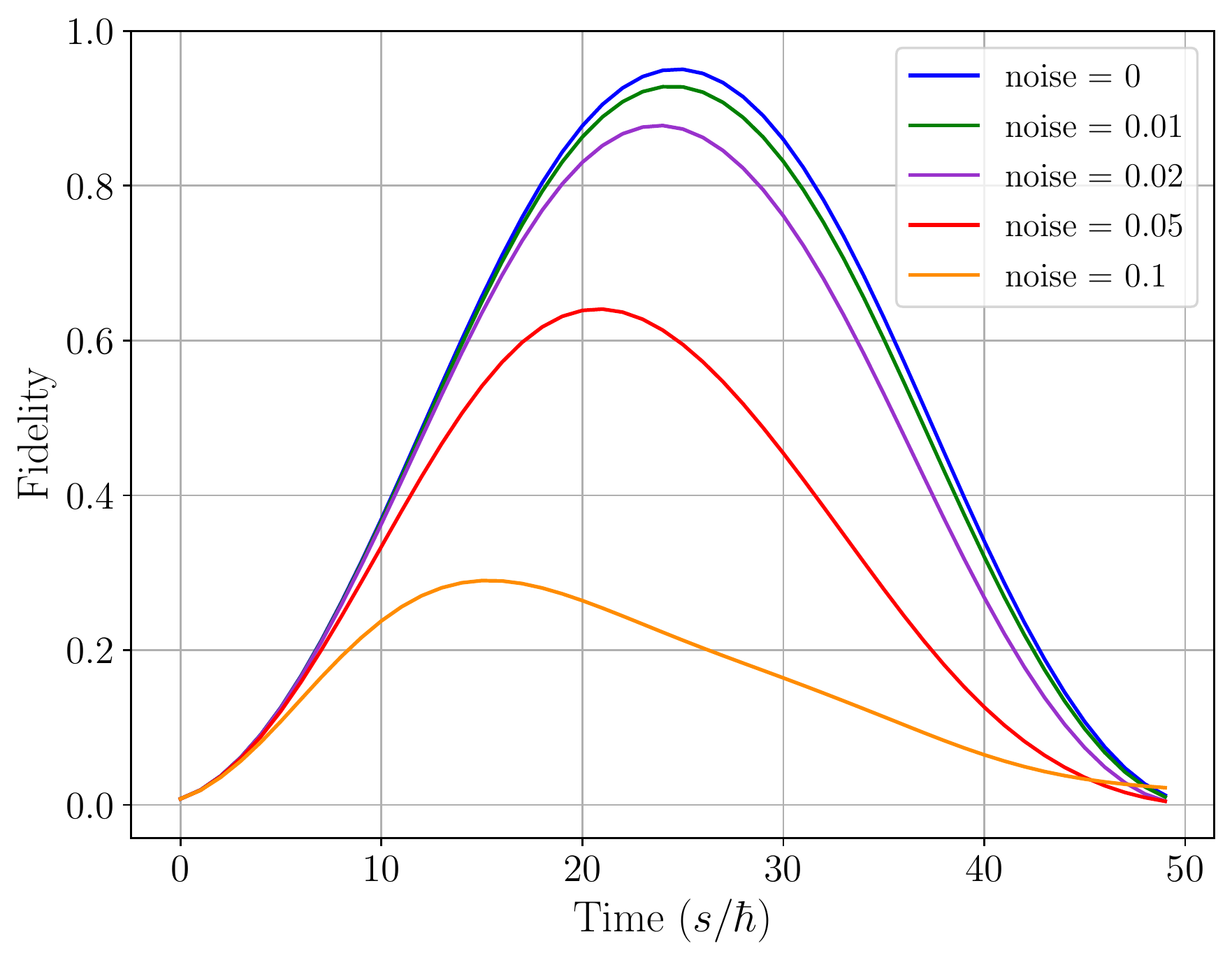}
    \caption{Fidelity of the marked state over time for a closed spin chain of 256 spins with $\alpha=1$. Noise is the standard deviation of a Gaussian distribution for the local field applied to each site. The noise can be compared to the marked site field which has a magnitude of 1. }
    \label{fig:dephasing_comp}
\end{figure}

\paragraph*{\label{sec:conslusion}Discussion.}
We have shown the possibility of a quantum speedup in a NISQ device with permanent long-range interactions. It can also be regarded as a quantum computation application of a quantum simulator -- in this sense, it is even less demanding than what is envisaged in a typical NISQ device because it does not require individual gates between distinct qubits, it merely requires a time-independent many-body Hamiltonian to be switched on, and then switched off after a specified interval of time. Our result is that optimal spatial search is physically realizable in one-dimensional spin chains through long-range interactions that decay as a $1/r^\alpha$ potential. We have demonstrated analytically and confirmed numerically that optimal spatial search in $\mathcal{O}(\sqrt{n})$ time exists for $\alpha < 1.5$. As $\alpha$ approaches 1.5, however, the fidelity becomes impractically low. For experimentally realistic $n$, around 50 to 100, the interaction range $\alpha \lesssim 1.2$ gives a fidelity above $0.88$. 
We have also shown that dephasing noise of $\sim 1\%$ of the marking field only slightly reduces fidelity. Therefore, without considering specific implementations, we argue that optimal spatial search could be achieved experimentally for $\alpha \lesssim 1.2$. Long-range interactions with these values of $\alpha$ have been demonstrated in ion traps~\cite{Richerme2014Non-localInteractions, Lanyon2017EfficientSystem, Friis2018, Pagano2019QuantumSimulator}. Using key results from Chakraborty et al.~\cite{Chakraborty2020OptimalityWalks}, we were able to show that an interaction strength of $\alpha = 1.5$ defines a phase transition-like point for optimal spatial search, where the asymptotic fidelity behaves like an order parameter. 

Although we have shown optimal spatial search for the ring geometry of a closed one-dimensional chain, optimal scaling also exists for open spin chains, with the time to reach marked states at the edge of the chain being longer than central marked states. We can motivate the investigation of the ring as equivalent to the central spins of a much longer open chain. We also note that optimal spatial search should exist for the more connected 2-dimensional periodic lattice with long-range interactions, likely with a transition at a higher value of $\alpha$. 

From a physical point of view, the spatial search algorithm allows detection of a local magnetic field faster than is possible classically. It could therefore find use in quantum sensing as a protocol for locating short-lived local magnetic fields along a spin chain -- or extended to a 2-dimensional lattice. Perhaps such a device could be used for recognizing features in image processing faster than classically possible if the image is encoded as an array of fields that mark different qubits. By comparison with the theoretical maximum for the fidelity, the spatial search algorithm could also be used to establish how well coupled the long-range interactions of a spin chain are.

\begin{acknowledgements} \paragraph*{Acknowledgements.}
D.L., A.B., and N.F. acknowledge support from the EPSRC Centre for Doctoral Training in Delivering Quantum Technologies, grant ref. EP/S021582/1. L.B.  acknowledges  support  by  the  program ``Rita  Levi  Montalcini''  for  young  researchers. 
\end{acknowledgements}

\clearpage

\onecolumngrid
\appendix
\section*{Supplementary Material}
\section{\label{sec:asymptotic_eigenvalues}Hamiltonian properties and asymptotic expansion of the rescaled eigenvalues}
The Hamiltonian for the XY model can be written in the single-excitation basis as 
\begin{equation}
    H = \sum_{i<j} J_{ij} \left(|i\rangle\langle j| + |j\rangle\langle i| \right).
\end{equation}
This gives a symmetric circulant matrix, where the interaction strength, $J_{ij}$, is dependent on the distance between the spins,
\begin{equation}
    J_{ij} =  \frac{1}{|j-i|^\alpha} + \frac{1}{|n-(j-i)|^\alpha},
\end{equation} 
with an interaction strength determined by $\alpha$. The eigenstates of $H$ are the Fourier states,
\begin{equation}
    |\phi(k)\rangle = \frac{1}{\sqrt{n}}\sum_{j=1}^{n}e^{\frac{i2\pi k j}{n}}|j\rangle,
\end{equation}
where $k = 1, ..., n$, with eigenvalues dependent on the interaction strength $\alpha$,
\begin{equation}
    \label{eq:eigenvalue_definition}
    \lambda_k(\alpha) = 2\sum_{j=1}^{n-1}\frac{1}{j^\alpha}\cos(\tfrac{2\pi k j}{n}).
\end{equation}
The largest eigenvalue is clearly at $k=n$, and the smallest is for $k=\frac{n}{2}$. We can also see that the eigenvalues are symmetric around $k=\frac{n}{2}$ such that $\lambda_{n-k}(\alpha) = \lambda_{k}(\alpha)$. The eigenvalues can be rewritten in terms of infinite sums, 
\begin{equation}
    \lambda_k(\alpha) = \sum_{j=1}^{\infty}\frac{e^{\frac{i2\pi k j}{n}}}{j^\alpha} + \sum_{j=1}^{\infty}\frac{e^{-\frac{i2\pi k j}{n}}}{j^\alpha}  -   \sum_{j=0}^{\infty}\frac{e^{\frac{i2\pi k j}{n}}}{(j+n)^\alpha} - \sum_{j=0}^{\infty}\frac{e^{-\frac{i2\pi k j}{n}}}{(j+n)^\alpha}.
\end{equation}
This means we can directly express the eigenvalues in terms of well-studied mathematical functions 
\begin{equation}
    \lambda_k(\alpha) =  \textrm{Li}_\alpha(e^{\frac{i2\pi k}{n}}) + \textrm{Li}_\alpha(e^{-\frac{i2\pi k}{n}}) -   \Phi(e^{\frac{i2\pi k}{n}}, \alpha, n) - \Phi(e^{-\frac{i2\pi k}{n}}, \alpha, n),
\end{equation}
where $\textrm{Li}_\alpha(z)$ is the polylogarithm~\cite[Section~25.12]{NIST:DLMF} and $\Phi(z, \alpha, n)$ is the Lerch transcendent function~\cite[Section~25.14]{NIST:DLMF}. In the case that $\alpha=1$ the eigenvalues are simplified because $\textrm{Li}_1(z) = -\ln (1-z)$.

The spectral condition, as introduced in the main text, and analytical predictions for fidelity and time~\cite{Chakraborty2020OptimalityWalks} apply for a normalised spectrum. We therefore rescale the eigenvalues such that the smallest eigenvalue is greater than $0$ and the largest eigenvalue is equal to $1$,
\begin{equation}
    \label{eq:rescaled_eigenvalue}
    \Tilde{\lambda}_k(\alpha) =  \frac{\lambda_k(\alpha) - \lambda_\frac{n}{2}(\alpha)}{\lambda_n(\alpha) - \lambda_\frac{n}{2}(\alpha)}.    
\end{equation}

For the analysis of the optimal search regime, we require the spectral gap and the parameters $S_q$. The spectral gap, $\Delta$, is the difference between the two highest eigenvalues. Here, it is a function of the interaction strength $\alpha$,
\begin{equation}
    \label{eq:spectral_gap}
    \Delta(\alpha) =  1 - \Tilde{\lambda}_{n-1}(\alpha).  
\end{equation}
The parameters $S_q$, in our case, are defined as \begin{equation}
    \label{eq:s_m_definition}
    S_q(\alpha) =  \frac{1}{n}\sum_{i=1}^{n-1}\frac{1}{(1-\Tilde{\lambda}_i(\alpha))^q},
\end{equation}
for integer $q \ge 1$. They were found to be important to the optimality of spatial search by Chakraborty et al.~\cite{Chakraborty2020OptimalityWalks} but do not appear to have a physical interpretation. In our case, $S_q$ depends only on the spectrum of $H$, the graph Hamiltonian.

First, we find an expression for the largest eigenvalue using Eq.~\eqref{eq:eigenvalue_definition}, 
\begin{align}
        \lambda_n(\alpha) &= 2 \sum_{j=1}^{n-1}\frac{1}{j^{\alpha}} \\
        &= 2\zeta(\alpha)  - 2\zeta(\alpha, n),
\end{align}
where we use the Hurwitz zeta function~\cite[Section~25.11]{NIST:DLMF},
\begin{equation}
    \zeta(\alpha,n) = \sum_{j=0}^\infty \frac{1}{(j+n)^\alpha},
\end{equation} 
and the Riemann zeta function~\cite[Section~25.2]{NIST:DLMF}, $\zeta(\alpha) = \zeta(\alpha, 1)$. We can then find an expression for the smallest eigenvalue, assuming $n$ is even, 
\begin{align}
        \lambda_\frac{n}{2}(\alpha) &= 2 \sum_{j=1}^{n-1}\frac{(-1)^j}{j^{\alpha}} \\
        &= 2 \left[\sum_{j=1}^{\frac{n}{2}-1}\frac{1}{(2j)^{\alpha}} - \sum_{j=0}^{\frac{n}{2}-1}\frac{1}{(2j+1)^{\alpha}} \right] \\
        &= 2^{1-\alpha}\left[2\zeta(\alpha)  - \zeta(\alpha, \tfrac{n}{2})-  2^\alpha\zeta(\alpha) + \zeta(\alpha, \tfrac{n+1}{2}) \right],
\end{align}
using $\zeta(\alpha,\frac{1}{2}) = (2^\alpha-1)\zeta(\alpha)$. For odd $n$, the smallest eigenvalues are $\lambda_\frac{n+1}{2}(\alpha)$ and $\lambda_\frac{n-1}{2}(\alpha)$. The corresponding expression tends to the same result for large $n$ and the following results apply asymptotically for both even and odd $n$. Asymptotically, $\zeta(\alpha,\frac{n}{2}+\frac{1}{2}) \sim \zeta(\alpha,\frac{n}{2})$, and we also have the series expansion~\cite{Nemes2017ErrorFunction}
\begin{equation}
    \zeta(\alpha,n) \sim \frac{n^{-\alpha}}{2} + \frac{n^{1-\alpha}}{\alpha-1} + \mathcal{O}(n^{-1-\alpha}),
\end{equation}
which is valid for $\alpha \ne 1$.

These asymptotic results lead to a simple expression for the difference between the largest eigenvalue and the smallest eigenvalue to leading orders in $n$,
\begin{equation}
    \lambda_n(\alpha) - \lambda_\frac{n}{2}(\alpha) \sim f(\alpha) - \frac{2n^{1-\alpha}}{\alpha-1} - n^{-\alpha},
\end{equation}
where $f(\alpha) =  4\zeta(\alpha) -2^{2-\alpha}\zeta(\alpha)$. For the asymptotic expression of the general eigenvalue $\lambda_k(\alpha)$, we require the series expansion 
\begin{equation}
    \label{eq:lerch_expansion}
    \Phi(z, \alpha, n) \sim \frac{n^{-\alpha}}{1-z} + \mathcal{O}(n^{-\alpha-1}),
\end{equation}
for $\textrm{Re}(z)<1$, and an expansion for the polylogarithm, for $k=1,2,\dots,n-1$ and $\alpha \ne 1,2,3,\dots$,
\begin{equation}
   \textrm{Li}_\alpha(e^{\frac{i2\pi k}{n}}) = \Gamma(1-\alpha)(-\tfrac{i2\pi k}{n})^{\alpha-1} + \sum_{j=0}^\infty \frac{\zeta(\alpha-j)}{j!}(\tfrac{i2\pi k}{n})^j,
\end{equation}
where $\Gamma(1-\alpha)$ is the standard gamma function. We therefore find 
\begin{equation}
    \label{eq:polylog_sum}
    \mathrm{Li}_\alpha(e^{\frac{i2\pi k}{n}}) + \mathrm{Li}_\alpha(e^{-\frac{i2\pi k}{n}}) = 2\zeta(\alpha) - f(\alpha)h(\alpha,\tfrac{n}{k}),
\end{equation}
where 
\begin{equation}
    \label{eq:h_definition}
    h(\alpha,\tfrac{n}{k}) = \frac{g_0(\alpha)}{f(\alpha)}\left(\frac{n}{k}\right)^{1-\alpha} + \sum_{m=1}^{\infty}\frac{g_m(\alpha)}{f(\alpha)}\left(\frac{n}{k}\right)^{-2m}
\end{equation}
with 
\begin{equation}
    \label{eq:g0_definition}
    g_0(\alpha)  = -2^{\alpha}\pi^{\alpha-1}\sin(\tfrac{\alpha\pi}{2})\Gamma(1-\alpha),
\end{equation}
and 
\begin{equation}
    \label{eq:gm_definition}
    g_m(\alpha)  = \frac{-2\zeta(\alpha - 2m)(2\pi i)^{2m}}{(2m)!}.
\end{equation}
This expansion is only valid up to $k=n/2$. However, this polylogarithm expression is symmetric about $k=n/2$, so that is not a problem.

From Eq.~\eqref{eq:lerch_expansion}, we can see that for large $n$ the Lerch transcendent function $\Phi(z, \alpha, n)$  will not be a leading order term in the rescaled eigenvalue of Eq.~\eqref{eq:rescaled_eigenvalue}. In the asymptotic limit, it can therefore be neglected, giving
\begin{align}
    \lambda_k(\alpha) - \lambda_{\frac{n}{2}}(\alpha) &\sim f(\alpha) - f(\alpha)h(\alpha, \tfrac{n}{k}) - \left(\frac{1}{1-e^{\frac{i2\pi k}{n}}} +  \frac{1}{1-e^{-\frac{i2\pi k}{n}}}\right)n^{-\alpha} + \left(\frac{1}{1-e^{i\pi}} +  \frac{1}{1-e^{-i\pi}}\right)n^{-\alpha} \\
    &= f(\alpha) - f(\alpha)h(\alpha, \tfrac{n}{k})
\end{align}

Finally, this provides an asymptotic expression for the rescaled eigenvalues 
\begin{equation}
    \label{eq:asymptotic_rescaled_eigenvalue}
    \Tilde{\lambda}_k(\alpha) \sim  \frac{1 - h(\alpha,\tfrac{n}{k})}{1 - \frac{2}{f(\alpha)} \frac{n^{1-\alpha}}{\alpha-1}}.
\end{equation}   
In the case $\frac{2}{f(\alpha)}\frac{n^{1-\alpha}}{\alpha-1} \ll 1$ – which is true asymptotically for $\alpha > 1$, but for $\alpha$ close to 1, only for very large $n$ – we can use the asymptotic scaling,
\begin{equation}
    \label{eq:rescaled_eigenvalue_expansion}
    \Tilde{\lambda}_k(\alpha) \sim  1 - h(\alpha,\tfrac{n}{k}) + (1-h(\alpha,\tfrac{n}{k})) \frac{2}{f(\alpha)}\frac{n^{1-\alpha}}{\alpha-1} +  \mathcal{O}(n^{-1-\alpha}). 
\end{equation}
The accuracy of the expansion is determined by the order of the approximation of $h(\alpha, \frac{n}{k})$. We have also further restricted our region of interest to $\alpha<3$. This can be justified retrospectively: once we have found the optimal search region for $0<\alpha<3$, it becomes clear that optimal search will not be possible for $\alpha>3$.  We cannot immediately take the first order term of the expansion for $h(\alpha,\frac{n}{k})$ because $k$ can be equal to $\frac{n}{2}$, which means we would introduce a finite residual. We investigate this question when considering the asymptotic scaling of the analytical amplitude of spatial search, later in the Supplementary Material.

\section{\label{sec:asymptotic_gap}Asymptotic scaling of the spectral gap}
The scaling of the spectral gap can be found accurately from Eq.~\eqref{eq:asymptotic_rescaled_eigenvalue} because only one eigenvalue is considered, $\Tilde{\lambda}_{n-1}(\alpha)$, where, by symmetry of the eigenvalues, we have $ \Tilde{\lambda}_{n-1}(\alpha) = \Tilde{\lambda}_{1}(\alpha)$. For $k=1$, asymptotically with $n$, the expansion of $h(\alpha,\tfrac{n}{k})$ is the first term,
\begin{equation}
     h(\alpha, n) \sim \frac{g_0(\alpha)}{f(\alpha)}n^{1-\alpha},
\end{equation}
which is valid for $\alpha<3$. Using the definition of the spectral gap in Eq.~\eqref{eq:spectral_gap}, we find the asymptotic scaling 
\begin{equation}
    \Delta(\alpha) \sim 1 -  \frac{1 - \frac{g_0(\alpha)}{f(\alpha)}n^{1-\alpha}}{1 - \frac{2}{f(\alpha)} \frac{n^{1-\alpha}}{\alpha-1}}.
\end{equation}
First, we consider the spectral gap for $\alpha<1$. In this case, both the numerator and denominator diverge with order $n^{1-\alpha}$, which dominates the constant 1 at large $n$. Thus, we find
\begin{align}
    \Delta(\alpha) &\sim 1 - \frac{(\alpha - 1)g_0(\alpha)}{2} \\
    &\sim 1 - 2^{\alpha-1}\pi^{\alpha-1}(1-\alpha) \sin(\tfrac{\alpha \pi}{2}) \Gamma(1-\alpha).
\end{align}
For $\alpha<1$, the spectral gap is therefore 
\begin{equation}
    \Delta(\alpha) = \mathcal{O}(1).
\end{equation}
For $\alpha=0$, the spectral gap tends to 1, as we would expect for the complete graph, and for $0<\alpha<1$ the spectral gap tends to a finite value greater than 0 and less than 1.

For $1<\alpha<3$, as used in the expansion of Eq.~\eqref{eq:rescaled_eigenvalue_expansion}, where $\frac{2}{f(\alpha)}\frac{n^{1-\alpha}}{\alpha-1} \ll 1$, we have an expansion for the spectral gap
\begin{equation}
    \Delta(\alpha) \sim \left(\frac{g_0(\alpha)}{f(\alpha)} - \frac{2}{(1-\alpha)f(\alpha)}\right) n^{1-\alpha},
\end{equation}
and the spectral gap is therefore 
\begin{equation}
    \label{eq:gap_scaling}
    \Delta(\alpha) = \mathcal{O}(n^{1-\alpha}).
\end{equation} For the case of $\alpha > 3$, the most significant scaling term of $\Delta(\alpha)$ becomes $\mathcal{O}(n^{-2})$. However, we are not concerned with this region as we do not find optimal spatial search.

\section{Asymptotic scaling of the analytical amplitude and time}
\label{sec:asymptotic_scaling_nu}
The analytical amplitude, as defined in Ref.~\cite{Chakraborty2020OptimalityWalks}, is given by 
\begin{equation}
    \label{eq:analytical_amplitude}
    \nu(\alpha) = \frac{S_1(\alpha)}{\sqrt{S_2(\alpha)}},
\end{equation}
with $S_q$ parameters defined in Eq.~\eqref{eq:s_m_definition}. From this, we can also define
\begin{equation}
    \label{eq:analytical_fidelity}
    F_\infty(\alpha) = \nu(\alpha)^2 = \frac{S_1(\alpha)^2}{S_2(\alpha)},
\end{equation}
which is the large $n$ analytical prediction for fidelity used in Fig.~\ref{fig:timefidelity} in the main text. In order to find the asymptotic scaling of the analytical amplitude, the asymptotic scaling of $S_1$ and $S_2$ must be found. In this section the primary aim is to find the $\alpha$ values for which $\nu(\alpha)$ converges to a non-zero value, and thus $F_\infty(\alpha)$ converges to a non-zero value. From the arguments given the main text, we know that $F_\infty$ must converge to a non-zero value for $\alpha<1$. It is therefore sufficient to only consider the region $\alpha>1$ in this section.

The asymptotic scaling for $S_q$ with respect to $n$ is more complicated than in the case of the spectral gap because it requires $\Tilde{\lambda}_k(\alpha)$ for every $k$. From Eq.~\eqref{eq:s_m_definition}, we have 
\begin{align}
    S_q(\alpha) &\sim \frac{1}{n} \sum_{k=1}^{n-1} \left( h(\alpha, \tfrac{n}{k}) - \tfrac{2n^{1-\alpha}}{f(\alpha)(\alpha-1) }\right)^{-q}  \\
    & \sim \left(\frac{1}{n} + \frac{2qn^{-\alpha}}{f(\alpha)(\alpha-1)} \right) \sum_{k=1}^{\frac{n}{2}}\frac{2}{h(\alpha, \tfrac{n}{k})^{q}} \\
    & \sim \frac{2}{n} \sum_{k=1}^{\frac{n}{2}}\frac{1}{h(\alpha, \tfrac{n}{k})^{q}}
\end{align}
where we are only interested in the dominant order of $n$ for our case of $\alpha>1$. Explicitly, we have
\begin{equation}
    S_q(\alpha) \sim \frac{2}{n} \sum_{k=1}^{\frac{n}{2}} \left( \frac{g_0(\alpha)}{f(\alpha)}\left(\frac{n}{k}\right)^{1-\alpha} + \sum_{m=1}^{\infty}\frac{g_m(\alpha)}{f(\alpha)}\left(\frac{n}{k}\right)^{-2m} \right)^{-q},
\end{equation}
which can be expanded using the binomial theorem as 
\begin{align}
    S_q(\alpha) &\sim \frac{2}{n} \sum_{k=1}^{\frac{n}{2}} \sum_{j=0}^{\infty} \binom{-q}{j} A^{-q-j} B^{j}, \\
    &= \frac{2}{n} \Biggl[ \sum_{k=1}^{\frac{n}{2}} A^{-q} + \sum_{k=1}^{\frac{n}{2}}\sum_{j=1}^{\infty}\binom{-q}{j}A^{-q-j} B^{j} \Biggr],
\end{align}
where 
\begin{equation}
    A = \frac{g_0(\alpha)}{f(\alpha)}\left(\frac{n}{k}\right)^{1-\alpha}
\end{equation}
and 
\begin{equation}
    B = \sum_{m=1}^{\infty}\frac{g_m(\alpha)}{f(\alpha)}\left(\frac{n}{k}\right)^{-2m}.
\end{equation}
It is clear that the largest value of $B$ occurs when $k=\frac{n}{2}$ and the smallest value of $B$ when $k=1$. We can therefore bound $B$ from both sides. From Eqs.~13 and 14, we have 
\begin{equation}
    B = \frac{2\zeta(\alpha)- \mathrm{Li}_\alpha(e^{\frac{i2\pi k}{n}}) - \mathrm{Li}_\alpha(e^{-\frac{i2\pi k}{n}})}{f(\alpha)}-A,
\end{equation}
where $f(\alpha) =  4\zeta(\alpha) -2^{2-\alpha}\zeta(\alpha)$, and
\begin{equation}
    A = \frac{g_0(\alpha)}{f(\alpha)}\left(\frac{n}{k}\right)^{1-\alpha}.
\end{equation}
For $k=\frac{n}{2}$, we use $\textrm{Li}_\alpha(-1) = -(1-2^{1-\alpha})\zeta(\alpha)$ to find
\begin{equation}
    B_{\textrm{max}} = 1 - A.
\end{equation}
As $n$ tends to infinity, we can consider the minimum value of $B$, which is for $k=1$, giving
\begin{equation}
    B_{\textrm{min}} = - A = 0,
\end{equation}
using $\mathrm{Li}_\alpha(e^{\frac{i2\pi k}{n}}) \rightarrow \zeta(\alpha)$, as $n$ goes to infinity. It is then straightforward to numerically verify that $0<|B|<1 $ for $\alpha$ in the range 
\begin{equation}
    1+\varepsilon< \alpha< 2-\varepsilon
\end{equation}
for some small constant $\varepsilon$. This allows us to investigate the asymptotic behaviour of $S_q(\alpha) \sim F + C$, where we have defined the first order term 
\begin{equation}
    F = \frac{2}{n} \sum_{k=1}^{\frac{n}{2}} A^{-q},
\end{equation}
and a higher order correction term 
\begin{equation}
    C = \frac{2}{n} \sum_{k=1}^{\frac{n}{2}}\sum_{j=1}^{\infty}\binom{-q}{j}A^{-q-j} B^{j}.
\end{equation}
We now use results from lemmas that are proved in the following section of the Supplementary Material. From Lemma~\ref{lemma:convergence_of_F}, we find that $F$ converges for $\alpha < 1 + \frac{1}{q}$ and, using Lemma~\ref{lemma:convergence_of_C}, we find that $C$ converges for $\alpha < \mathrm{min}(2-\varepsilon, \frac{q+4}{q+1})$. Overall, we see that $S_1$ converges for $\alpha < 2$, and $S_2$ converges for $\alpha < 1.5$. For the region where $S_q$ does not converge, and in the case that $F$ diverges and $C$ converges, the most significant order of $n$ would come from $F$. We can therefore predict 
\begin{equation}
    \label{eq:s_q_divergences_rate}
    S_q(\alpha) = \mathcal{O}(n^{q\alpha-q-1})
\end{equation}
for $\alpha$ just above $1 + \frac{1}{q}$, where $F$ begins to diverge but where $C$ still converges. Eq.~\eqref{eq:s_q_divergences_rate} demonstrates the most significant order of $n$. However, we do not have an asymptotic limit because we do not have a strict value for $C$ when it converges. We can see approximately how large $C$ is compared with $F$ by looking at the highest-order term and constant in $F$ relative to numerical calculation for $S_q$. Fig.~\ref{fig:approximate_s1} demonstrates the size of $C$ for $S_1$. We have used an approximation of $S_q$,
\begin{equation}
    \bar{S}_q(\alpha) = \frac{2f(\alpha)^q}{g_0(\alpha)^q}\left(n^{q\alpha-q-1} \zeta(q\alpha -q) + \frac{2^{q\alpha -q-1}}{1+q-q\alpha}\right),
\end{equation}
where $\bar{S}_q(\alpha) \approx S_q$. We note that for $S_2$, $C$ is larger than for $S_1$, especially for the divergence region. 
\begin{figure}[]
    \centering
    \includegraphics[scale=0.48]{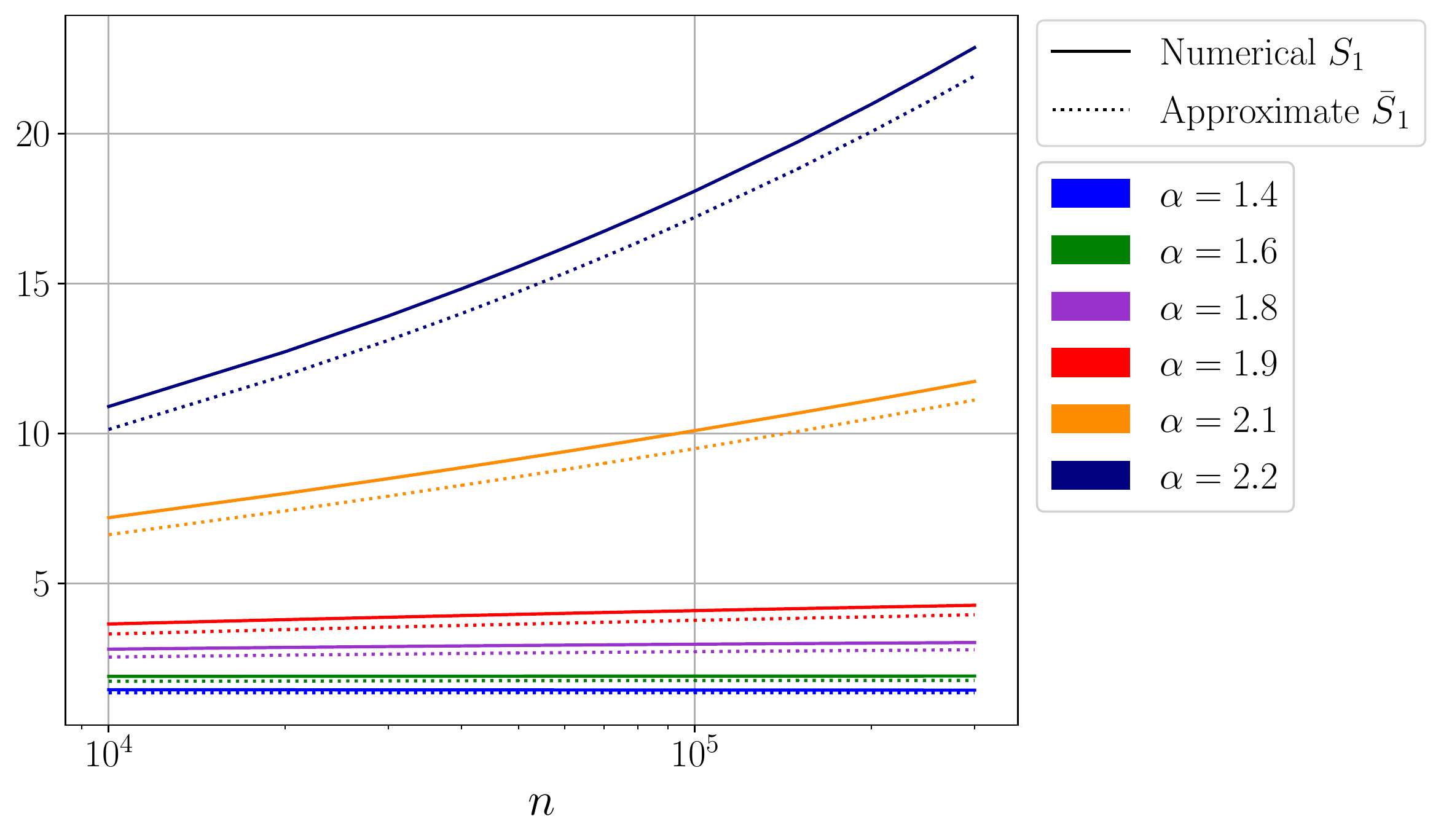}
    \caption{Approximate $\bar{S}_1$ compared to $S_1$ from numerical calculations of Eq.~\eqref{eq:s_m_definition}. The size of the higher order term $C$ is the difference between the numerical value $S_1$ and the approximate value $\bar{S}_1$. Divergence for $\alpha > 2$ is suggested, as predicted by Eq.\eqref{eq:s_q_divergences_rate}.}
    \label{fig:approximate_s1}
\end{figure}
\begin{figure}
    \centering
    \includegraphics[scale=0.48]{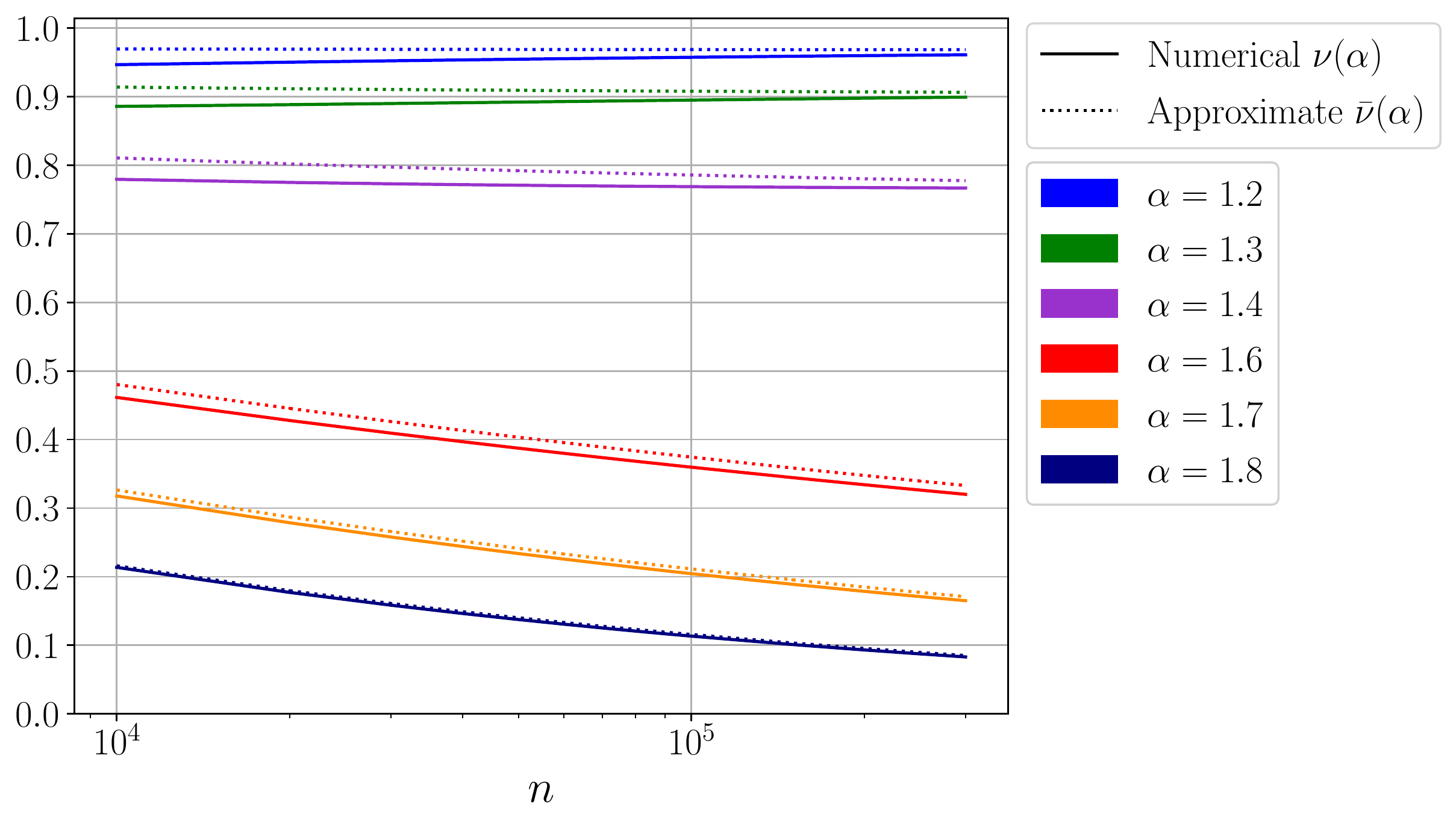}
    \caption{Approximate $\bar{\nu}(\alpha)$ compared to $\nu(\alpha)$ calculated using numerical calculations of $S_1$ and $S_2$ from Eq.~\eqref{eq:s_m_definition}. The difference in behaviour around the asymptotic transition $\alpha = 1.5$ is shown, with $\nu(\alpha)$ tending to a finite value for $\alpha<1.5$ and tending to 0 for $\alpha>1.5$.}
    \label{fig:approximate_nu}
\end{figure}

We can now find an approximation for the analytical amplitude 
\begin{align}
    \bar{\nu}(\alpha) &= \frac{\bar{S}_1(\alpha)}{\sqrt{\bar{S}_2(\alpha)}} \\
    &= \frac{\sqrt{2} \left(n^{\alpha-2} \zeta(\alpha -1) + \frac{2^{\alpha -2}}{2-\alpha}\right) }{\sqrt{n^{2\alpha-3} \zeta(2\alpha -2) + \frac{2^{2\alpha -3}}{3-2\alpha}}}, \label{eq:approximate_nu}
\end{align}
where $\bar{\nu}(\alpha) \approx \nu(\alpha)$. The accuracy of this approximation is illustrated in Fig.~\ref{fig:approximate_nu}. We can use Eq.~\eqref{eq:approximate_nu} to give an approximate value for the analytical amplitude $\nu(\alpha)$ asymptotically, in the case $1<\alpha<1.5$,
\begin{equation}
    \lim_{n\rightarrow\infty}\bar{\nu}(\alpha) = \frac{\sqrt{3-2\alpha}}{2-\alpha}.
\end{equation}
This approximation is used to find the asymptotic approximation of the fidelity, $F_\infty(\alpha) = \nu(\alpha)^2$, in the thermodynamic limit, 
\begin{equation}
    \label{eq:approx_asymptotic_fidelity}
    \bar{F}_\infty(\alpha) = \lim_{n\rightarrow\infty}\bar{\nu}(\alpha)^2,
\end{equation}
which is used in the inset of Fig.~\ref{fig:timefidelity}(a) in the main text.

\begin{figure}[ht]
    \centering
		\includegraphics[width=0.46\textwidth]{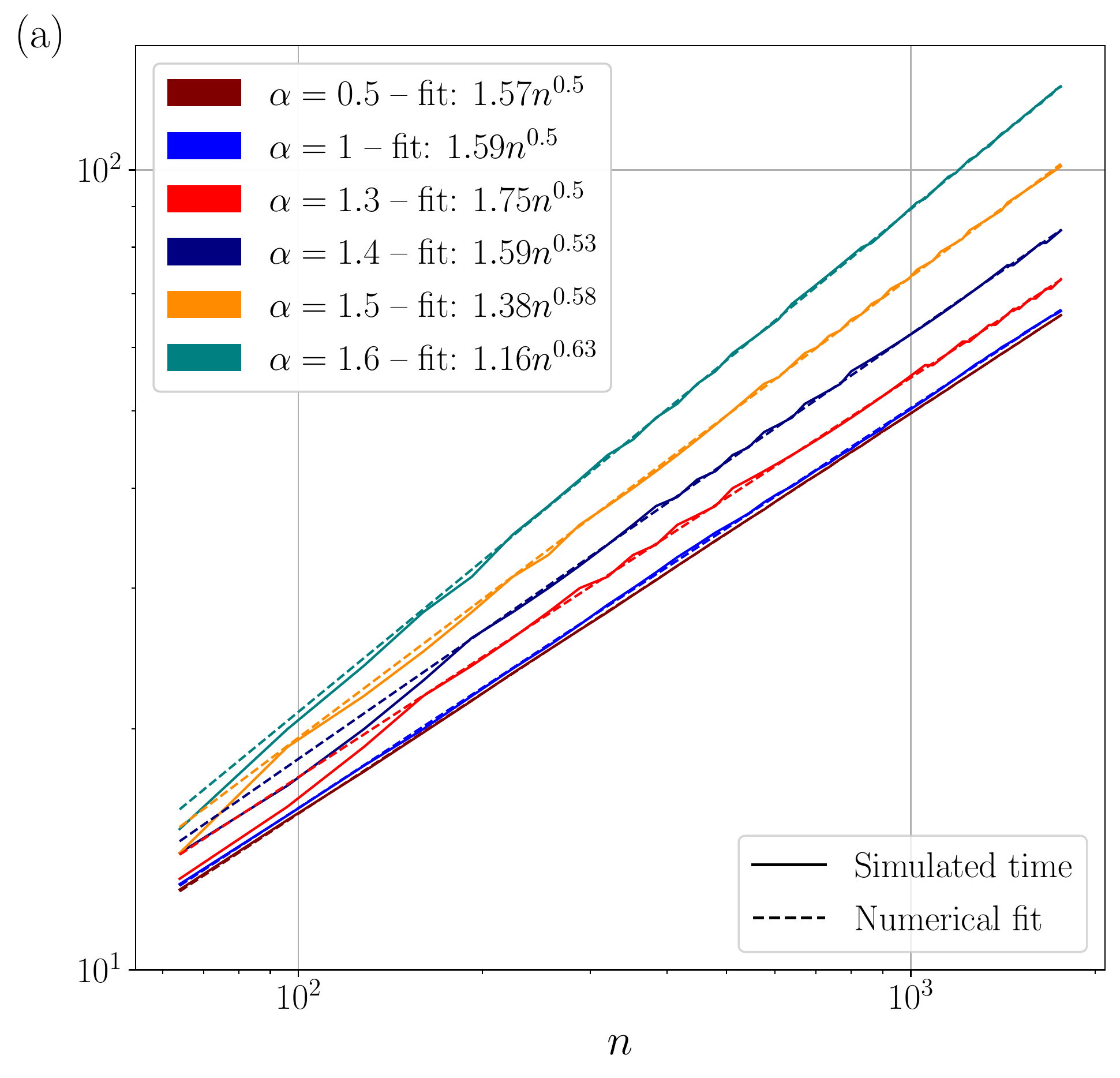}
        \includegraphics[width=0.46\textwidth]{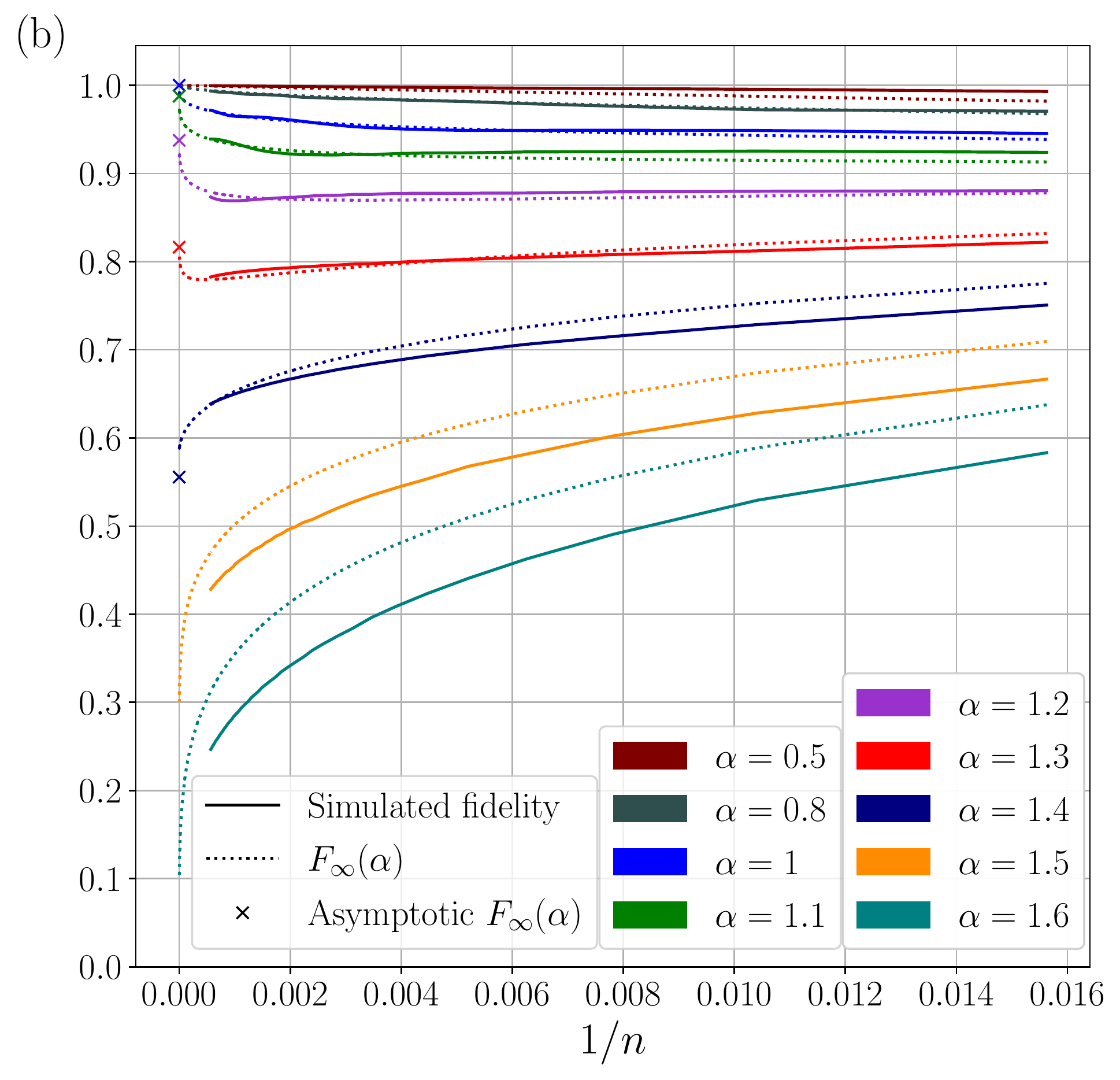}
		\caption{
		(a) Time to reach the maximum fidelity of the marked state against the number of spins $n$. The line (dashed) of best fit is also shown. 
	  (b) Maximum fidelity of the spatial search against $1/n$. Numerical results are compared with the analytical prediction of $F_\infty(\alpha)$ for the fidelity. $F_\infty(\alpha) = \nu(\alpha)^2$ is calculated using numerical calculations of $S_1$ and $S_2$. For $\alpha = 1.5$ and $\alpha=1.6$, we do not have optimal spatial search asymptotically. The asymptotic $F_\infty(\alpha)$ is an approximation given by Eq.~\eqref{eq:approx_asymptotic_fidelity}.}
	\label{fig:timefidelity_SM}
\end{figure}
Finally, we can now determine the region for which $\nu(\alpha)$ tends to a finite value. Due to $S_1$ converging for $\alpha < 2$ and $S_2$ converging for $\alpha < 1.5$, it follows that $\nu(\alpha)$ must converge for $\alpha < 1.5$. For $1.5 < \alpha < 2$, $S_1$ converges and $S_2$ diverges, therefore $\nu(\alpha) \rightarrow 0$. For $\alpha > 2$, $\sqrt{S_2}$ diverges faster than $S_1$, so $\nu(\alpha) \rightarrow 0$. Thus, the region for asymptotic convergence to non-zero analytical amplitude $\nu(\alpha)$, and therefore convergence to non-zero $F_\infty(\alpha)$, is $\alpha < 1.5$. 

The analytical time for optimal spatial search~\cite{Chakraborty2020OptimalityWalks} can be written in terms of the analytical fidelity $F_\infty(\alpha) = \nu(\alpha)^2$,
\begin{equation}
    T(\alpha) \approx \frac{\pi}{2}\sqrt{\frac{n}{F_\infty(\alpha)}}.
\end{equation}
Thus, the asymptotic scaling of time is related to the asymptotic scaling of $\nu(\alpha)$. As $\alpha$ increases the scaling of fidelity with $n$ becomes more significant as the order of $n$ contributing to $\tilde{\lambda}_k(\alpha)$ increases. It is therefore sufficient to show that the time scaling approaches optimal for large $n$ in the regime $1<\alpha<1.5$. In this regime, using the approximation for analytical amplitude in Eq.~\eqref{eq:approximate_nu} the analytical time is given by 
\begin{align}
    T &\approx \frac{\pi}{2}\sqrt{n}\sqrt{\frac{b_0 + b_1 n^{2\alpha-3}}{\left( a_0 + a_1 n^{\alpha-2} \right)^2}}   \\ 
    &= \frac{\pi}{2}\sqrt{n}\sqrt{a_0^{-2}\left( b_0 + b_1 n^{2\alpha-3}\right)\left( 1 - 2 \tfrac{a_1}{a_0} n^{\alpha-2} + \mathcal{O}(n^{2\alpha-4}) \right)} \\
    &= \frac{\pi}{2}\sqrt{\frac{n b_0}{a_0^2}}\sqrt{1 + \frac{b_1}{b_0} n^{2\alpha-3} - \frac{2 a_1}{a_0} n^{\alpha -2} - \frac{2 a_1 b_1}{a_0 b_0} n^{3\alpha -5} + \mathcal{O}(n^{2\alpha-4})},
\end{align}
where $a_0 = \sqrt{2}\frac{2^{\alpha -2}}{2-\alpha}$, $a_1 = \sqrt{2}\zeta(\alpha -1)$, $b_0 = \frac{2^{2\alpha -3}}{3-2\alpha}$, and $b_1 = \zeta(2\alpha - 2)$. The dominant term therefore becomes 
\begin{equation}
    T \approx \frac{\pi}{2}\sqrt{\frac{n}{\bar{F}_{\infty}(\alpha)}}
\end{equation}
for large $n$, where $\bar{F}_{\infty}(\alpha)$, is the asymptotic approximation of the fidelity. 
The accuracy of this asymptotic scaling for low $n$ can be verified directly by fitting the simulated time to reach the maximum fidelity for various $\alpha$, see Fig.~\ref{fig:timefidelity_SM}(a). For $\alpha \leq 1.3$, we find that $T = \mathcal{O}(\sqrt{n})$ is very accurate. For $1.3 <\alpha < 1.5$, the deviation is small up to the $n$ we have simulated. As $n$ becomes larger, the time scaling for this range will also approach $T = \mathcal{O}(\sqrt{n})$, as $F_{\infty}(\alpha)$ tends to a constant, see Fig.~\ref{fig:timefidelity_SM}(b).

\section{\label{sec:lemmas}Lemmas used for the convergence of \texorpdfstring{$\bm{S_q}$}{}}
\begin{lemma}
\label{lemma:convergence_of_F}
The first order term $F$, \[ F = \frac{2}{n} \sum_{k=1}^{\frac{n}{2}} A^{-q},\]
with $A = \frac{g_0(\alpha)}{f(\alpha)}\left(\frac{n}{k}\right)^{1-\alpha}$,
tends to a finite value as $n\rightarrow\infty$ for all integers $q \ge 1$ for $\alpha$ in the range 
\[1 < \alpha < 1 + \frac{1}{q}.\]
\end{lemma}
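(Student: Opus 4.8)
The plan is to reduce $F$ to an explicit power of $n$ times a generalized harmonic sum, and then exploit the fact that this sum diverges at exactly the rate needed to cancel the prefactor. First I would substitute the definition of $A$ and pull out the $\alpha$-dependent constant, writing
\[
F = 2\left(\frac{f(\alpha)}{g_0(\alpha)}\right)^{q} n^{q(\alpha-1)-1}\sum_{k=1}^{n/2}\frac{1}{k^{q(\alpha-1)}}.
\]
Setting $\beta = q(\alpha-1)$, the hypothesis $1<\alpha<1+\tfrac1q$ is equivalent to $0<\beta<1$, which is precisely the regime in which $\sum_{k=1}^{N}k^{-\beta}$ is a divergent partial sum of the (analytically continued) zeta series. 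Note that in this range both $f(\alpha)$ and $g_0(\alpha)$ are finite and nonzero, so the prefactor is well defined.

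The key step is to insert the standard Euler--Maclaurin (equivalently, integral-test) asymptotic
\[
\sum_{k=1}^{N}\frac{1}{k^{\beta}} = \frac{N^{1-\beta}}{1-\beta} + \zeta(\beta) + O(N^{-\beta}),
\]
valid for $0<\beta<1$. Taking $N=n/2$ and multiplying by the prefactor $n^{\beta-1}$, the leading term contributes $n^{\beta-1}(n/2)^{1-\beta}/(1-\beta) = 2^{\beta-1}/(1-\beta)$, a finite constant, because the powers of $n$ cancel exactly. The remaining terms carry the factor $n^{\beta-1}\to 0$ (since $\beta<1$) and therefore vanish in the limit. This yields
\[
\lim_{n\to\infty}F = \frac{2^{\beta}}{1-\beta}\left(\frac{f(\alpha)}{g_0(\alpha)}\right)^{q} = \frac{2^{q(\alpha-1)}}{1-q(\alpha-1)}\left(\frac{f(\alpha)}{g_0(\alpha)}\right)^{q},
\]
which is finite precisely when $1-q(\alpha-1)>0$, i.e.\ $\alpha<1+\tfrac1q$.

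I expect the only real content to be justifying the partial-sum asymptotic and confirming that the subleading terms are genuinely negligible; both follow from a routine integral comparison, so the main obstacle is really just bookkeeping the cancellation of $n$-powers. It is worth noting that the boundary $\beta=1$ (i.e.\ $\alpha=1+\tfrac1q$) is where the harmonic sum crosses over to logarithmic growth, $\sum_{k=1}^{N}k^{-1}\sim\ln N$, so that $F\sim\ln(n/2)$ diverges; this explains why the stated range is open at the upper endpoint and matches the constant term $2^{\beta-1}/(1-\beta)$ appearing in the approximation $\bar S_q(\alpha)$ derived above.
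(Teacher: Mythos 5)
Your proof is correct and follows essentially the same route as the paper's: both reduce $F$ to $2\bigl(f(\alpha)/g_0(\alpha)\bigr)^q\, n^{q(\alpha-1)-1} H_{n/2,\,q(\alpha-1)}$ and apply the Euler--Maclaurin asymptotic for the generalized harmonic sum, with the exact cancellation of $n$-powers in the leading term producing the finite limit $\tfrac{2^{q(\alpha-1)}}{1-q(\alpha-1)}\bigl(f(\alpha)/g_0(\alpha)\bigr)^q$, which agrees with the paper's constant. The substitution $\beta=q(\alpha-1)$ and the closing remark on logarithmic divergence at the endpoint $\alpha=1+\tfrac1q$ are only cosmetic additions to what is the same argument.
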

\begin{proof}
We can write the summation in the definition of $F$ in terms of the generalised harmonic numbers,
\begin{equation}
    H_{\frac{n}{2},s} = \sum_{j=1}^{\frac{n}{2}}\frac{1}{j^{s}},
\end{equation}
giving
\begin{align}
    \frac{2}{n} \sum_{k=1}^{\frac{n}{2}} A^{-q} &=  \frac{2g_0(\alpha)^{-q}}{f(\alpha)^{-q}}n^{q\alpha-q-1} \sum_{k=1}^{\frac{n}{2}}\frac{1}{k^{q\alpha-q}} \\
    &= \frac{2g_0(\alpha)^{-q}}{f(\alpha)^{-q}}n^{q\alpha-q-1} H_{\frac{n}{2},q\alpha-q}.
\end{align}
Using the Euler-Maclaurin formula~\cite[Section~2.10]{NIST:DLMF}, we have an expansion for $H_{\frac{n}{2},s}$ for large $n$ for $s \ne 1$,
\begin{equation}
    H_{\frac{n}{2},s} = \zeta(s) + \frac{1}{1-s}\left(\frac{n}{2}\right)^{1-s} + \frac{1}{2}\left(\frac{n}{2}\right)^{-s} - \frac{s}{12}\left(\frac{n}{2}\right)^{-s-1} + \mathcal{O}\left((\tfrac{n}{2})^{-s-2}\right).
\end{equation}
This expansion gives 
\begin{equation}
    \frac{2}{n} \sum_{k=1}^{\frac{n}{2}} A^{-q} = \frac{2f(\alpha)^q}{g_0(\alpha)^q}\left(n^{q\alpha-q-1} \zeta(q\alpha -q) + \frac{2^{q\alpha -q-1}}{1+q-q\alpha} + \mathcal{O}(n^{-1})\right).
\end{equation}
We are only considering the most significant order of $n$ for the asymptotic behaviour. For convergence, the exponent of $n$ must be negative and we have $q\alpha - q - 1 < 0$, which gives the condition 
\begin{equation}
   1 < \alpha < 1 + \frac{1}{q}
\end{equation}
for $F$ to tend to a finite value asymptotically with $n$ for all integer $q \ge 1$.
\end{proof}

\begin{lemma}
\label{lemma:convergence_of_C}
Assuming $|B| < 1$, the correction term $C$, 
\[
    C = \frac{2}{n} \sum_{k=1}^{\frac{n}{2}}\sum_{j=1}^{\infty}\binom{-q}{j}A^{-q-j} B^{j},
\]
tends to a finite value as $n \rightarrow \infty$ for all integers $q \ge 1$ for $\alpha$ in the range 
\[1 < \alpha < \frac{q+4}{q+1},\]
\end{lemma}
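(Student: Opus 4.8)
The plan is to follow the template of Lemma~\ref{lemma:convergence_of_F}: for each fixed $j$ reduce the inner $k$-sum to a generalized harmonic number, read off its dominant order in $n$ by Euler--Maclaurin, and then locate the single value of $j$ that furnishes the binding constraint on $\alpha$. First I would interchange the $j$- and $k$-summations. Writing $A^{-q-j}B^{j}=A^{-q}(B/A)^{j}$, the tail in $j$ is geometric provided $|B/A|<1$; the leading term $A\sim\tfrac{g_0(\alpha)}{f(\alpha)}(n/k)^{1-\alpha}$ dominates $B\sim\tfrac{g_1(\alpha)}{f(\alpha)}(n/k)^{-2}$ throughout the bulk $k\ll n$, since $1-\alpha>-2$ for $\alpha<3$, and the assumed bound $|B|<1$ controls the remaining $k=\mathcal{O}(n)$ region. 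Equivalently, one may Taylor-expand $A^{-q}\big[(1+B/A)^{-q}-1\big]=-q\,A^{-q-1}B+\mathcal{O}(A^{-q-2}B^{2})$, which already exhibits the $j=1$ term as dominant.

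For each fixed $j$ I would then replace $B$ by its leading contribution $\tfrac{g_1(\alpha)}{f(\alpha)}(n/k)^{-2}$, checking that the subleading pieces $(n/k)^{-2m}$ with $m\ge 2$ only lower the order of $n$ and so cannot create new divergences. With this reduction $\tfrac{2}{n}\sum_{k=1}^{n/2}A^{-q-j}B^{j}$ becomes, up to an $\alpha$- and $j$-dependent constant, a multiple of $\tfrac{2}{n}\sum_{k=1}^{n/2}(n/k)^{s_j}=2\,n^{s_j-1}H_{n/2,\,s_j}$ with exponent $s_j=(\alpha-1)q+(\alpha-3)j$. Inserting the same large-$n$ expansion of $H_{n/2,s}$ used in Lemma~\ref{lemma:convergence_of_F} gives the behaviour $2\zeta(s_j)\,n^{s_j-1}+\tfrac{2^{s_j}}{1-s_j}+\mathcal{O}(n^{-1})$, so this $j$-term tends to the finite constant $\tfrac{2^{s_j}}{1-s_j}$ exactly when $s_j<1$, and otherwise grows like $n^{s_j-1}$.

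Because $\alpha<3$ makes $\alpha-3<0$, the exponents $s_j$ strictly decrease in $j$, so the most dangerous term is $j=1$. The requirement $s_1<1$, namely $(\alpha-1)q+\alpha-3<1$, rearranges to $\alpha(q+1)<q+4$, i.e.\ $\alpha<\tfrac{q+4}{q+1}$; for such $\alpha$ every $s_j<1$ and each $j$-term converges. It then remains to sum the limiting constants over $j$: the polynomial growth $\big|\binom{-q}{j}\big|=\binom{q+j-1}{j}$ is overwhelmed by the geometric factor $2^{s_j}=2^{(\alpha-1)q}2^{(\alpha-3)j}$ together with the $B^{j}$-decay supplied by $|B|<1$, giving a convergent series.

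The step I expect to be the main obstacle is justifying the interchange of the $n\to\infty$ limit with the infinite sum over $j$ \emph{uniformly} in $k$. Since $B$ is itself an infinite series in $(n/k)^{-2m}$, the quantity $B^{j}$ is a multi-indexed series whose leading order in $n$ must be isolated without losing control of the cross-terms, and the domination $|B|<|A|$ that makes the binomial tail geometric can fail in the boundary region $k\to n/2$, where $A$ and $B$ are both $\mathcal{O}(1)$. Showing that this boundary region contributes only a vanishing fraction of the weighted sum $\tfrac{2}{n}\sum_k$, and that the neglected subleading terms of $B$ are genuinely of lower order, is the delicate part of the argument; everything else is the harmonic-number bookkeeping already rehearsed in Lemma~\ref{lemma:convergence_of_F}.
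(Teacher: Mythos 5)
Your proposal follows the same structural template as the paper's proof --- interchange the $j$- and $k$-sums, reduce each $k$-sum to a generalized harmonic number $H_{n/2,s}$, apply the Euler--Maclaurin expansion exactly as in Lemma~\ref{lemma:convergence_of_F}, and locate the binding constraint --- but it departs from the paper at the decisive step, and your version of that step is the more defensible one. The paper first bounds $|B^j|\le |B|$ and only then expands the single surviving factor of $B$ over $m$; with that bound the relevant exponent is $q\alpha-q+j\alpha-j-1-2m$, which \emph{increases} with $j$ when $\alpha>1$, so the paper's requirement that it be negative ``for all $j\ge 1$ and $m\ge 1$'' would literally force $\alpha\le 1$, and the stated threshold $\alpha<\tfrac{q+4}{q+1}$ is obtained only by treating $j=1$, $m=1$ as the worst case. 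Your bookkeeping --- retaining the $j$-fold decay $B^j\sim (g_1/f)^j (n/k)^{-2j}$, so the exponent is $s_j=(\alpha-1)q+(\alpha-3)j$, strictly decreasing in $j$ for $\alpha<3$ --- is precisely what makes ``$j=1$ is the binding case'' a consequence rather than an assertion, and it reproduces the same threshold via $s_1<1$. Your remark that the cross-terms of $B^j$ (the $m\ge 2$ pieces) only lower the order in $n$ is also correct.

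The genuine gap is your final step, the summation over $j$ of the limiting constants, and it is exactly here that the paper invokes machinery for which you have no substitute. After the $n\to\infty$ limit, the $j$-th constant is $\binom{-q}{j}\left(\tfrac{f}{g_0}\right)^{q+j}\left(\tfrac{g_1}{f}\right)^j \tfrac{2^{s_j}}{1-s_j}$, so the geometric ratio you need is $\left|\tfrac{g_1}{g_0}\right|2^{\alpha-3}<1$; the phrase ``the $B^j$-decay supplied by $|B|<1$'' does not apply to these constants, because $B$ itself has disappeared in the limit, and the factor $2^{(\alpha-3)j}$ alone does not obviously beat $|g_1/g_0|^j$. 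Similarly, in the boundary region $k\approx n/2$ where $A$ and $B$ are both $\mathcal{O}(1)$, convergence of the binomial tail requires $|B/A|$ bounded away from $1$, i.e.\ a quantitative lower bound on $A=\tfrac{g_0}{f}(n/k)^{1-\alpha}$. Both requirements reduce to quantitative control of $f(\alpha)/g_0(\alpha)$, which is what the paper's Lemma~\ref{lemma:ratio_bound} ($|f/g_0|<1$ for $1<\alpha<3$) supplies; the paper combines it with the exact evaluation $\sum_{m\ge1}g_m(\alpha)2^{-2m}=(4-2^{2-\alpha})\zeta(\alpha)-2^{1-\alpha}g_0(\alpha)$ at $k=n/2$ and the binomial identity $\sum_{j\ge1}\binom{-q}{j}(f/g_0)^j=(1+f/g_0)^{-q}-1$ to close the sum. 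To complete your argument you would need to prove such a bound (or a uniform $|B/A|\le\rho<1$), after which the limit/sum interchange you flag can be handled by dominated convergence; without it, convergence of your $j$-series --- in particular for $q=1$ on the stretch $2<\alpha<\tfrac{5}{2}$ --- is asserted, not established.
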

\begin{proof}Since $|B|<1$, we know $|B|>|B^j|$ for all integers $j>1$. We can therefore bound each term of the summation and bound the entire sum:
\begin{align}
    C &< \frac{2}{n} \sum_{k=1}^{\frac{n}{2}}\sum_{j=1}^{\infty}\binom{-q}{j} A^{-q-j} B \\
    &= \frac{2}{n} \sum_{k=1}^{\frac{n}{2}}\sum_{j=1}^{\infty}\binom{-q}{j}\frac{g_0(\alpha)^{-q-j}}{f(\alpha)^{-q-j}} \sum_{m=1}^{\infty}\frac{g_m(\alpha)}{f(\alpha)}\left(\frac{n}{k}\right)^{q\alpha-q+j\alpha-j-2m} \\
    &= 2 \sum_{j=1}^{\infty}\binom{-q}{j}\frac{g_0(\alpha)^{-q-j}}{f(\alpha)^{-q-j}} \sum_{m=1}^{\infty}\frac{g_m(\alpha)}{f(\alpha)}n^{q\alpha-q+j\alpha-j-1-2m}H_{\frac{n}{2},q\alpha-q+j\alpha-j-2m},
\end{align}
where, without loss of generality, we have assumed $B$ is positive. We have also used the fact that we know $C$ converges for finite $n$ when the summations were swapped. The generalised harmonic number has been used,
\begin{equation}
    \label{eq:asymptotic_sk}
    H_{\frac{n}{2},s} = \sum_{j=1}^{\frac{n}{2}}\frac{1}{j^{s}},
\end{equation}
for which we have an expansion for large $n$ using the Euler-Maclaurin formula for $s \ne 1$,
\begin{equation}
    H_{\frac{n}{2},s} = \zeta(s) + \frac{1}{1-s}\left(\frac{n}{2}\right)^{1-s} + \frac{1}{2}\left(\frac{n}{2}\right)^{-s} - \frac{s}{12}\left(\frac{n}{2}\right)^{-s-1} + \mathcal{O}\left((\tfrac{n}{2})^{-s-2}\right).
\end{equation}
Substituting in the asymptotic expansion for the harmonic numbers, we find 
\begin{multline}
    C < 2 \sum_{j=1}^{\infty}\binom{-q}{j}\frac{g_0(\alpha)^{-q-j}}{f(\alpha)^{-q-j}} \sum_{m=1}^{\infty}\frac{g_m(\alpha)}{f(\alpha)}\Biggl( n^{q\alpha-q+j\alpha-j-1-2m}\zeta(q\alpha-q+j\alpha-j-2m) + \frac{2^{q\alpha-q+j\alpha-j-1-2m}}{2m+1-j\alpha+j+q-q\alpha} \\ + 2^{q\alpha-q+j\alpha-j-1-2m}n^{-1} + \mathcal{O}(n^{-2}) \Biggr).
\end{multline}
The terms of $\mathcal{O}(n^{-1})$ tend to 0 asymptotically for $n$. We can also see that if $q\alpha-q+j\alpha-j-1-2m > 0$, then as $n$ tends to infinity the terms in the sum also tend to infinity and the sum diverges. In order for the possibility of a convergent sum, we therefore require
\begin{equation}
    \alpha < \frac{q+j+1+2m}{q+j},
\end{equation}
which must be the case for all values of $j \ge 1$ and $m \ge 1$. This gives 
\begin{equation}
    \alpha < \frac{q+4}{q+1}.
\end{equation}
For these values of $\alpha$, we have 
\begin{align}
    C &< 2 \sum_{j=1}^{\infty}\binom{-q}{j}\frac{g_0(\alpha)^{-q-j}}{f(\alpha)^{-q-j}} \sum_{m=1}^{\infty}\frac{g_m(\alpha)}{f(\alpha)}\frac{2^{q\alpha-q+j\alpha-j-1-2m}}{2m+1-j\alpha+j+q-q\alpha} \\
    &= 2 \sum_{j=1}^{\infty}\binom{-q}{j}\frac{g_0(\alpha)^{-q-j}}{f(\alpha)^{-q-j}} \sum_{m=1}^{\infty}\frac{g_m(\alpha)}{f(\alpha)}2^{-2m}\left(\frac{2^{q\alpha-q+j\alpha-j-1}}{2m+1-j\alpha+j+q-q\alpha}\right).
\end{align}
We can introduce a positive constant $c \ge 1$, such that 
\begin{equation}
\frac{2^{q\alpha-q+j\alpha-j-1}}{c(2m+1-j\alpha+j+q-q\alpha)} < 1,
\end{equation}
since we already have $2m+1-j\alpha+j+q-q\alpha > 0$.
This gives 
\begin{equation}
    C < 2c  \sum_{j=1}^{\infty}\binom{-q}{j}\frac{g_0(\alpha)^{-q-j}}{f(\alpha)^{-q-j+1}} \sum_{m=1}^{\infty}g_m(\alpha)2^{-2m}.
\end{equation}
From Eqs.~\eqref{eq:polylog_sum} and~\eqref{eq:h_definition}, we have 
\begin{equation}
    \sum_{m=1}^{\infty}g_m(\alpha)2^{-2m} = 4\zeta(\alpha) - 2^{2-\alpha}\zeta(\alpha) - 2^{1-\alpha}g_0(\alpha),
\end{equation}
where we have set $k=\frac{n}{2}$, and used $\textrm{Li}_\alpha(-1) = -(1-2^{1-\alpha})\zeta(\alpha)$. Thus, we find 
\begin{equation}
    C < 2c (4\zeta(\alpha) - 2^{2-\alpha}\zeta(\alpha) - 2^{1-\alpha}g_0(\alpha)) \frac{f(\alpha)^{q+1}}{g_0(\alpha)^q}  \sum_{j=1}^{\infty}\binom{-q}{j}\left( \frac{f(\alpha)}{g_0(\alpha)}\right)^j.
\end{equation}
Using Lemma~\ref{lemma:ratio_bound}, we have $|\frac{f(\alpha)}{g_0(\alpha)}| < 1$ for $1<\alpha<3$. We can therefore use the binomial theorem again to show
\begin{equation}
    \sum_{j=1}^{\infty}\binom{-q}{j}\left( \frac{f(\alpha)}{g_0(\alpha)}\right)^j = \left(1 + \frac{f(\alpha)}{g_0(\alpha)} \right)^{-q} - 1,
\end{equation}
which is of course a finite value. We have proved that $C$ tends to a finite value as $n \rightarrow \infty$ for all $q \ge 1$ for $\alpha$ in the range 
\begin{equation}
    1 < \alpha < \frac{q+4}{q+1}.
\end{equation}
\end{proof}

\begin{lemma}
\label{lemma:ratio_bound} $|\frac{f(\alpha)}{g_0(\alpha)}| < 1$ for $\alpha$ in the region $1 < \alpha < 3$, where $f(\alpha) = 4\zeta(\alpha) - 2^{2-\alpha}\zeta(\alpha)$ and $g_0(\alpha) = -2^\alpha \pi^{\alpha-1} \sin(\frac{\alpha\pi}{2})\Gamma(1-\alpha)$.
\end{lemma}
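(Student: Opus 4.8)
My plan is to exploit the fact that $f$ and $g_0$ are tied together by the Riemann functional equation, which collapses the ratio onto a single zeta value and turns the claim into an explicit one‑variable inequality. First I would simplify the numerator: since $2^{2-\alpha}=4\cdot 2^{-\alpha}$,
\begin{equation}
f(\alpha)=4\zeta(\alpha)-2^{2-\alpha}\zeta(\alpha)=4\bigl(1-2^{-\alpha}\bigr)\zeta(\alpha).
\end{equation}
For the denominator I would invoke the functional equation $\zeta(s)=2^{s}\pi^{s-1}\sin(\tfrac{\pi s}{2})\Gamma(1-s)\zeta(1-s)$ at $s=\alpha$, which reads $2^{\alpha}\pi^{\alpha-1}\sin(\tfrac{\pi\alpha}{2})\Gamma(1-\alpha)=\zeta(\alpha)/\zeta(1-\alpha)$ and hence identifies $g_0(\alpha)=-\zeta(\alpha)/\zeta(1-\alpha)$. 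Dividing,
\begin{equation}
\frac{f(\alpha)}{g_0(\alpha)}=-4\bigl(1-2^{-\alpha}\bigr)\zeta(1-\alpha),
\end{equation}
so that $\bigl|f(\alpha)/g_0(\alpha)\bigr|=4\bigl(1-2^{-\alpha}\bigr)\,\bigl|\zeta(1-\alpha)\bigr|$. The cancellation of the divergent factor $\zeta(\alpha)$ is the decisive simplification.

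Next I would reduce the lemma to the explicit inequality $(1-2^{-\alpha})\,\bigl|\zeta(1-\alpha)\bigr|<\tfrac14$ on $(1,3)$. For $\alpha\in(1,3)$ the argument $1-\alpha$ lies in $(-2,0)$, where $\zeta$ is strictly negative (its nearest zero being the trivial zero at $s=-2$), so $\bigl|\zeta(1-\alpha)\bigr|=-\zeta(1-\alpha)$. This factor takes the value $-\zeta(0)=\tfrac12$ as $\alpha\to1^{+}$ and $-\zeta(-2)=0$ as $\alpha\to3^{-}$, whence $\bigl|f/g_0\bigr|\to 4\cdot\tfrac12\cdot\tfrac12=1$ at the left endpoint and $\to0$ at the right; the bound is therefore saturated precisely at $\alpha=1$, consistent with a strict inequality on the open interval.

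The substantive step is to prove that the explicit function $Q(\alpha)=(1-2^{-\alpha})\bigl(-\zeta(1-\alpha)\bigr)$ is strictly decreasing on $(1,3)$; since $Q(1)=\tfrac14$, this yields $Q(\alpha)<\tfrac14$ and closes the proof. Writing out $Q'<0$ and using that $\zeta$ and $\zeta'$ are finite and of constant sign on $(-2,0)$, the condition is equivalent to the logarithmic‑derivative bound
\begin{equation}
\frac{\zeta'(s)}{\zeta(s)}\bigg|_{s=1-\alpha}>\frac{\ln 2}{2^{\alpha}-1},\qquad s\in(-2,0).
\end{equation}

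I expect this to be the principal obstacle, since it demands a uniform lower bound on the logarithmic derivative of $\zeta$ along the segment $(-2,0)$. I would obtain such a bound either from the functional equation once more, rewriting $-\zeta(1-\alpha)=2^{1-\alpha}\pi^{-\alpha}\bigl|\cos(\tfrac{\pi\alpha}{2})\bigr|\Gamma(\alpha)\zeta(\alpha)$ so that its logarithmic derivative becomes the elementary combination of $\ln\pi$, the digamma function $\psi(\alpha)$, $-\tfrac{\pi}{2}\tan(\tfrac{\pi\alpha}{2})$ and $\zeta'(\alpha)/\zeta(\alpha)$ for $\alpha>1$ (the latter two carrying mutually cancelling singularities at the endpoints, which I would group together), or, since the inequality is now elementary and smooth on a compact interval, by a rigorous interval‑arithmetic verification. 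The anchoring value $\zeta'(0)=-\tfrac12\ln(2\pi)$ already confirms that the bound holds with a strict margin as $\alpha\to1^{+}$, which is the most delicate region.
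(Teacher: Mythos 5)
Your algebraic reduction coincides exactly with the paper's own first step: both of you use the Riemann functional equation to cancel the divergent factor $\zeta(\alpha)$ and arrive at $f(\alpha)/g_0(\alpha) = -(4-2^{2-\alpha})\,\zeta(1-\alpha) = -4(1-2^{-\alpha})\zeta(1-\alpha)$, with the correct limiting values $1$ at $\alpha\to 1^+$ and $0$ at $\alpha\to 3^-$. The genuine gap is in everything after that. The whole remaining content of the lemma is your claim that $Q(\alpha)=(1-2^{-\alpha})\bigl(-\zeta(1-\alpha)\bigr)$ is strictly decreasing, which you correctly reduce to the inequality $\zeta'(s)/\zeta(s)\big|_{s=1-\alpha} > \ln 2/(2^{\alpha}-1)$ on $(-2,0)$ — and then you never prove that inequality: you name two candidate routes (a functional-equation expansion whose terms you do not bound, and an interval-arithmetic verification you do not carry out) and explicitly defer both. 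This cannot be waved through, because $Q(1)=\tfrac14$ exactly: the target $Q(\alpha)<\tfrac14$ is saturated at the left endpoint, so no coarse bounding of the two factors separately can work near $\alpha=1$; only derivative-level information (precisely the unproven inequality) closes the argument there. Your local expansion using $\zeta'(0)=-\tfrac12\ln(2\pi)$, giving $Q(\alpha)\approx\tfrac14-\tfrac{\ln\pi}{4}(\alpha-1)$, is only a statement about a neighbourhood of $\alpha=1$. Moreover, the "compact interval" remark glosses over the endpoint $s=-2$, where $\zeta(s)\to 0$ and $\zeta'(s)/\zeta(s)$ diverges (favourably, but a separate limiting argument is still needed), so the quantity you want to verify numerically is not even continuous on the closed interval.

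For comparison, the paper finishes by a different, more qualitative route: it records $\zeta(0)=-\tfrac12$, $\zeta(-1)=-\tfrac1{12}$, $\zeta(-2)=0$, notes the absence of poles, and cites Lee (2014) for the fact that $\zeta'(s)\neq 0$ in $-2<\mathrm{Re}(s)<0$, so that $-\zeta(1-\alpha)$ decreases monotonically from $\tfrac12$ to $0$; the bound is then read off. (That reasoning is itself terse — monotonicity of $-\zeta(1-\alpha)$ alone does not formally control the product against the increasing prefactor $4-2^{2-\alpha}$ — but it is the paper's chosen argument, and it outsources the hard fact about $\zeta$ to a published result.) Your plan, by contrast, demands a strictly stronger quantitative statement, which is exactly why you end up facing a uniform bound on $\zeta'/\zeta$ that the paper never has to produce. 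If you want to complete your version, the cleanest repair is to observe that $\ln 2/(2^{\alpha}-1)\le \ln 2$ for $\alpha\ge 1$, so it suffices to prove $\zeta'(s)/\zeta(s)>\ln 2$ on $(-2,0)$, equivalently that $-2^{-s}\zeta(s)$ is increasing there; but that, too, requires an actual proof (e.g. via the Hadamard product representation of $\zeta'/\zeta$, or a certified computation together with the endpoint asymptotics), not a sketch. As written, the decisive inequality is asserted rather than established, so the proposal is incomplete at exactly the point where the lemma has content.
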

\begin{proof}
We can use the Riemann functional equation~\cite[Section~25.4]{NIST:DLMF},
\begin{equation}
    \zeta(\alpha) = 2^\alpha \pi^{\alpha-1} \sin(\tfrac{\alpha\pi}{2})\Gamma(1-\alpha) \zeta(1-\alpha),
\end{equation}
to give 
\begin{equation}
    f(\alpha) = (4-2^{2-\alpha})2^\alpha \pi^{\alpha-1} \sin(\tfrac{\alpha\pi}{2})\Gamma(1-\alpha) \zeta(1-\alpha).
\end{equation}
We now have 
\begin{equation}
    \frac{f(\alpha)}{g_0(\alpha)} = -(4-2^{2-\alpha})\zeta(1-\alpha).
\end{equation}
$\zeta(0)=-\frac{1}{2}$, $\zeta(-1)=-\frac{1}{12}$, and $\zeta(-2) = 0$, so, since we know there are no poles, and there are no turning points, we find that $|\frac{f(\alpha)}{g_0(\alpha)}| < 1$ for $1 < \alpha < 3$ – where we have used Ref.~\cite{Lee2014Zeros1/2}, which states that the derivative of $\zeta(s)$ does not have a zero for $-2<\textrm{Re}(s)<0$ meaning there cannot be a turning point.
\end{proof}

\twocolumngrid


\begin{thebibliography}{44}%
	\makeatletter
	\providecommand \@ifxundefined [1]{%
		\@ifx{#1\undefined}
	}%
	\providecommand \@ifnum [1]{%
		\ifnum #1\expandafter \@firstoftwo
		\else \expandafter \@secondoftwo
		\fi
	}%
	\providecommand \@ifx [1]{%
		\ifx #1\expandafter \@firstoftwo
		\else \expandafter \@secondoftwo
		\fi
	}%
	\providecommand \natexlab [1]{#1}%
	\providecommand \enquote  [1]{``#1''}%
	\providecommand \bibnamefont  [1]{#1}%
	\providecommand \bibfnamefont [1]{#1}%
	\providecommand \citenamefont [1]{#1}%
	\providecommand \href@noop [0]{\@secondoftwo}%
	\providecommand \href [0]{\begingroup \@sanitize@url \@href}%
	\providecommand \@href[1]{\@@startlink{#1}\@@href}%
	\providecommand \@@href[1]{\endgroup#1\@@endlink}%
	\providecommand \@sanitize@url [0]{\catcode `\\12\catcode `\$12\catcode
		`\&12\catcode `\#12\catcode `\^12\catcode `\_12\catcode `\%12\relax}%
	\providecommand \@@startlink[1]{}%
	\providecommand \@@endlink[0]{}%
	\providecommand \url  [0]{\begingroup\@sanitize@url \@url }%
	\providecommand \@url [1]{\endgroup\@href {#1}{\urlprefix }}%
	\providecommand \urlprefix  [0]{URL }%
	\providecommand \Eprint [0]{\href }%
	\providecommand \doibase [0]{http://dx.doi.org/}%
	\providecommand \selectlanguage [0]{\@gobble}%
	\providecommand \bibinfo  [0]{\@secondoftwo}%
	\providecommand \bibfield  [0]{\@secondoftwo}%
	\providecommand \translation [1]{[#1]}%
	\providecommand \BibitemOpen [0]{}%
	\providecommand \bibitemStop [0]{}%
	\providecommand \bibitemNoStop [0]{.\EOS\space}%
	\providecommand \EOS [0]{\spacefactor3000\relax}%
	\providecommand \BibitemShut  [1]{\csname bibitem#1\endcsname}%
	\let\auto@bib@innerbib\@empty
	\bibitem [{\citenamefont {Grover}(1996)}]{Grover1996ASearch}%
	\BibitemOpen
	\bibfield  {author} {\bibinfo {author} {\bibfnamefont {L.~K.}\ \bibnamefont
			{Grover}},\ }in\ \href {\doibase 10.1145/237814.237866} {\emph {\bibinfo
			{booktitle} {Proceedings of the Annual ACM Symposium on Theory of
				Computing}}},\ Vol.\ \bibinfo {volume} {Part F129452}\ (\bibinfo  {publisher}
	{Association for Computing Machinery},\ \bibinfo {address} {New York, New
		York, USA},\ \bibinfo {year} {1996})\ pp.\ \bibinfo {pages}
	{212--219}\BibitemShut {NoStop}%
	\bibitem [{\citenamefont {Bennett}\ \emph {et~al.}(1997)\citenamefont
		{Bennett}, \citenamefont {Bernstein}, \citenamefont {Brassard},\ and\
		\citenamefont {Vazirani}}]{Bennett1997StrengthsComputing}%
	\BibitemOpen
	\bibfield  {author} {\bibinfo {author} {\bibfnamefont {C.~H.}\ \bibnamefont
			{Bennett}}, \bibinfo {author} {\bibfnamefont {E.}~\bibnamefont {Bernstein}},
		\bibinfo {author} {\bibfnamefont {G.}~\bibnamefont {Brassard}}, \ and\
		\bibinfo {author} {\bibfnamefont {U.}~\bibnamefont {Vazirani}},\ }\href
	{\doibase 10.1137/S0097539796300933} {\bibfield  {journal} {\bibinfo
			{journal} {SIAM Journal on Computing}\ }\textbf {\bibinfo {volume} {26}},\
		\bibinfo {pages} {1510} (\bibinfo {year} {1997})}\BibitemShut {NoStop}%
	\bibitem [{\citenamefont {Childs}\ and\ \citenamefont
		{Goldstone}(2004)}]{Childs2004}%
	\BibitemOpen
	\bibfield  {author} {\bibinfo {author} {\bibfnamefont {A.~M.}\ \bibnamefont
			{Childs}}\ and\ \bibinfo {author} {\bibfnamefont {J.}~\bibnamefont
			{Goldstone}},\ }\href {\doibase 10.1103/PhysRevA.70.022314} {\bibfield
		{journal} {\bibinfo  {journal} {Physical Review A - Atomic, Molecular, and
				Optical Physics}\ }\textbf {\bibinfo {volume} {70}},\ \bibinfo {pages}
		{022314} (\bibinfo {year} {2004})}\BibitemShut {NoStop}%
	\bibitem [{\citenamefont {Novo}\ \emph {et~al.}(2015)\citenamefont {Novo},
		\citenamefont {Chakraborty}, \citenamefont {Mohseni}, \citenamefont {Neven},\
		and\ \citenamefont {Omar}}]{Novo2015SystematicGraphs}%
	\BibitemOpen
	\bibfield  {author} {\bibinfo {author} {\bibfnamefont {L.}~\bibnamefont
			{Novo}}, \bibinfo {author} {\bibfnamefont {S.}~\bibnamefont {Chakraborty}},
		\bibinfo {author} {\bibfnamefont {M.}~\bibnamefont {Mohseni}}, \bibinfo
		{author} {\bibfnamefont {H.}~\bibnamefont {Neven}}, \ and\ \bibinfo {author}
		{\bibfnamefont {Y.}~\bibnamefont {Omar}},\ }\href {\doibase
		10.1038/srep13304} {\bibfield  {journal} {\bibinfo  {journal} {Scientific
				Reports 2015 5:1}\ }\textbf {\bibinfo {volume} {5}},\ \bibinfo {pages} {1}
		(\bibinfo {year} {2015})}\BibitemShut {NoStop}%
	\bibitem [{\citenamefont {Chakraborty}\ \emph {et~al.}(2016)\citenamefont
		{Chakraborty}, \citenamefont {Novo}, \citenamefont {Ambainis},\ and\
		\citenamefont {Omar}}]{Chakraborty2016SpatialGraphs}%
	\BibitemOpen
	\bibfield  {author} {\bibinfo {author} {\bibfnamefont {S.}~\bibnamefont
			{Chakraborty}}, \bibinfo {author} {\bibfnamefont {L.}~\bibnamefont {Novo}},
		\bibinfo {author} {\bibfnamefont {A.}~\bibnamefont {Ambainis}}, \ and\
		\bibinfo {author} {\bibfnamefont {Y.}~\bibnamefont {Omar}},\ }\href {\doibase
		10.1103/PhysRevLett.116.100501} {\bibfield  {journal} {\bibinfo  {journal}
			{Physical Review Letters}\ }\textbf {\bibinfo {volume} {116}},\ \bibinfo
		{pages} {100501} (\bibinfo {year} {2016})}\BibitemShut {NoStop}%
	\bibitem [{\citenamefont {Chakraborty}\ \emph {et~al.}(2017)\citenamefont
		{Chakraborty}, \citenamefont {Novo}, \citenamefont {Di~Giorgio},\ and\
		\citenamefont {Omar}}]{Chakraborty2017OptimalNetworks}%
	\BibitemOpen
	\bibfield  {author} {\bibinfo {author} {\bibfnamefont {S.}~\bibnamefont
			{Chakraborty}}, \bibinfo {author} {\bibfnamefont {L.}~\bibnamefont {Novo}},
		\bibinfo {author} {\bibfnamefont {S.}~\bibnamefont {Di~Giorgio}}, \ and\
		\bibinfo {author} {\bibfnamefont {Y.}~\bibnamefont {Omar}},\ }\href {\doibase
		10.1103/PhysRevLett.119.220503} {\bibfield  {journal} {\bibinfo  {journal}
			{Physical Review Letters}\ }\textbf {\bibinfo {volume} {119}},\ \bibinfo
		{pages} {220503} (\bibinfo {year} {2017})}\BibitemShut {NoStop}%
	\bibitem [{\citenamefont {Novo}\ \emph {et~al.}(2018)\citenamefont {Novo},
		\citenamefont {Chakraborty}, \citenamefont {Mohseni},\ and\ \citenamefont
		{Omar}}]{Novo2018Environment-assistedSearch}%
	\BibitemOpen
	\bibfield  {author} {\bibinfo {author} {\bibfnamefont {L.}~\bibnamefont
			{Novo}}, \bibinfo {author} {\bibfnamefont {S.}~\bibnamefont {Chakraborty}},
		\bibinfo {author} {\bibfnamefont {M.}~\bibnamefont {Mohseni}}, \ and\
		\bibinfo {author} {\bibfnamefont {Y.}~\bibnamefont {Omar}},\ }\href {\doibase
		10.1103/PhysRevA.98.022316} {\bibfield  {journal} {\bibinfo  {journal}
			{Physical Review A}\ }\textbf {\bibinfo {volume} {98}},\ \bibinfo {pages}
		{022316} (\bibinfo {year} {2018})}\BibitemShut {NoStop}%
	\bibitem [{\citenamefont {Wong}\ \emph {et~al.}(2018)\citenamefont {Wong},
		\citenamefont {W{\"{u}}nscher}, \citenamefont {Lockhart},\ and\ \citenamefont
		{Severini}}]{Wong2018QuantumGraphs}%
	\BibitemOpen
	\bibfield  {author} {\bibinfo {author} {\bibfnamefont {T.~G.}\ \bibnamefont
			{Wong}}, \bibinfo {author} {\bibfnamefont {K.}~\bibnamefont
			{W{\"{u}}nscher}}, \bibinfo {author} {\bibfnamefont {J.}~\bibnamefont
			{Lockhart}}, \ and\ \bibinfo {author} {\bibfnamefont {S.}~\bibnamefont
			{Severini}},\ }\href {\doibase 10.1103/PhysRevA.98.012338} {\bibfield
		{journal} {\bibinfo  {journal} {Physical Review A}\ }\textbf {\bibinfo
			{volume} {98}},\ \bibinfo {pages} {012338} (\bibinfo {year}
		{2018})}\BibitemShut {NoStop}%
	\bibitem [{\citenamefont {Osada}\ \emph {et~al.}(2020)\citenamefont {Osada},
		\citenamefont {Coutinho}, \citenamefont {Omar}, \citenamefont {Sanaka},
		\citenamefont {Munro},\ and\ \citenamefont
		{Nemoto}}]{Osada2020Continuous-timeNetwork}%
	\BibitemOpen
	\bibfield  {author} {\bibinfo {author} {\bibfnamefont {T.}~\bibnamefont
			{Osada}}, \bibinfo {author} {\bibfnamefont {B.}~\bibnamefont {Coutinho}},
		\bibinfo {author} {\bibfnamefont {Y.}~\bibnamefont {Omar}}, \bibinfo {author}
		{\bibfnamefont {K.}~\bibnamefont {Sanaka}}, \bibinfo {author} {\bibfnamefont
			{W.~J.}\ \bibnamefont {Munro}}, \ and\ \bibinfo {author} {\bibfnamefont
			{K.}~\bibnamefont {Nemoto}},\ }\href {\doibase 10.1103/PhysRevA.101.022310}
	{\bibfield  {journal} {\bibinfo  {journal} {Physical Review A}\ }\textbf
		{\bibinfo {volume} {101}},\ \bibinfo {pages} {022310} (\bibinfo {year}
		{2020})}\BibitemShut {NoStop}%
	\bibitem [{\citenamefont {Sato}\ \emph {et~al.}(2020)\citenamefont {Sato},
		\citenamefont {Nikuni},\ and\ \citenamefont {Watabe}}]{Sato2020ScalingWalk}%
	\BibitemOpen
	\bibfield  {author} {\bibinfo {author} {\bibfnamefont {R.}~\bibnamefont
			{Sato}}, \bibinfo {author} {\bibfnamefont {T.}~\bibnamefont {Nikuni}}, \ and\
		\bibinfo {author} {\bibfnamefont {S.}~\bibnamefont {Watabe}},\ }\href
	{\doibase 10.1103/PhysRevA.101.022312} {\bibfield  {journal} {\bibinfo
			{journal} {Physical Review A}\ }\textbf {\bibinfo {volume} {101}},\ \bibinfo
		{pages} {022312} (\bibinfo {year} {2020})}\BibitemShut {NoStop}%
	\bibitem [{\citenamefont {Chakraborty}\ \emph {et~al.}(2020)\citenamefont
		{Chakraborty}, \citenamefont {Novo},\ and\ \citenamefont
		{Roland}}]{Chakraborty2020OptimalityWalks}%
	\BibitemOpen
	\bibfield  {author} {\bibinfo {author} {\bibfnamefont {S.}~\bibnamefont
			{Chakraborty}}, \bibinfo {author} {\bibfnamefont {L.}~\bibnamefont {Novo}}, \
		and\ \bibinfo {author} {\bibfnamefont {J.}~\bibnamefont {Roland}},\ }\href
	{\doibase 10.1103/PhysRevA.102.032214} {\bibfield  {journal} {\bibinfo
			{journal} {Physical Review A}\ }\textbf {\bibinfo {volume} {102}},\ \bibinfo
		{pages} {032214} (\bibinfo {year} {2020})}\BibitemShut {NoStop}%
	\bibitem [{\citenamefont {Kempe}(2003)}]{Kempe2003QuantumOverview}%
	\BibitemOpen
	\bibfield  {author} {\bibinfo {author} {\bibfnamefont {J.}~\bibnamefont
			{Kempe}},\ }\href {\doibase 10.1080/00107151031000110776} {\enquote {\bibinfo
			{title} {{Quantum random walks: An introductory overview}},}\ } (\bibinfo
	{year} {2003})\BibitemShut {NoStop}%
	\bibitem [{\citenamefont {Farhi}\ and\ \citenamefont
		{Gutmann}(1998{\natexlab{a}})}]{Farhi1998QuantumTrees}%
	\BibitemOpen
	\bibfield  {author} {\bibinfo {author} {\bibfnamefont {E.}~\bibnamefont
			{Farhi}}\ and\ \bibinfo {author} {\bibfnamefont {S.}~\bibnamefont
			{Gutmann}},\ }\href@noop {} {\emph {\bibinfo {title} {{Quantum Computation
					and Decision Trees}}}},\ \bibinfo {type} {Tech. Rep.}\ (\bibinfo {year}
	{1998})\BibitemShut {NoStop}%
	\bibitem [{\citenamefont {Childs}\ \emph {et~al.}(2003)\citenamefont {Childs},
		\citenamefont {Cleve}, \citenamefont {Deotto}, \citenamefont {Farhi},
		\citenamefont {Gutmann},\ and\ \citenamefont
		{Spielman}}]{Childs2003ExponentialWalk}%
	\BibitemOpen
	\bibfield  {author} {\bibinfo {author} {\bibfnamefont {A.~M.}\ \bibnamefont
			{Childs}}, \bibinfo {author} {\bibfnamefont {R.}~\bibnamefont {Cleve}},
		\bibinfo {author} {\bibfnamefont {E.}~\bibnamefont {Deotto}}, \bibinfo
		{author} {\bibfnamefont {E.}~\bibnamefont {Farhi}}, \bibinfo {author}
		{\bibfnamefont {S.}~\bibnamefont {Gutmann}}, \ and\ \bibinfo {author}
		{\bibfnamefont {D.~A.}\ \bibnamefont {Spielman}},\ }in\ \href {\doibase
		10.1145/780542.780552} {\emph {\bibinfo {booktitle} {Proceedings of the
				thirty-fifth ACM symposium on Theory of computing - STOC '03}}}\ (\bibinfo
	{publisher} {ACM Press},\ \bibinfo {address} {New York, New York, USA},\
	\bibinfo {year} {2003})\ p.~\bibinfo {pages} {59}\BibitemShut {NoStop}%
	\bibitem [{\citenamefont {Preskill}(2018)}]{Preskill2018}%
	\BibitemOpen
	\bibfield  {author} {\bibinfo {author} {\bibfnamefont {J.}~\bibnamefont
			{Preskill}},\ }\href {\doibase 10.22331/q-2018-08-06-79} {\bibfield
		{journal} {\bibinfo  {journal} {Quantum}\ }\textbf {\bibinfo {volume} {2}},\
		\bibinfo {pages} {79} (\bibinfo {year} {2018})}\BibitemShut {NoStop}%
	\bibitem [{\citenamefont {Georgescu}\ \emph {et~al.}(2014)\citenamefont
		{Georgescu}, \citenamefont {Ashhab},\ and\ \citenamefont
		{Nori}}]{Georgescu2014QuantumSimulation}%
	\BibitemOpen
	\bibfield  {author} {\bibinfo {author} {\bibfnamefont {I.~M.}\ \bibnamefont
			{Georgescu}}, \bibinfo {author} {\bibfnamefont {S.}~\bibnamefont {Ashhab}}, \
		and\ \bibinfo {author} {\bibfnamefont {F.}~\bibnamefont {Nori}},\ }\href
	{\doibase 10.1103/RevModPhys.86.153} {\bibfield  {journal} {\bibinfo
			{journal} {Reviews of Modern Physics}\ }\textbf {\bibinfo {volume} {86}},\
		\bibinfo {pages} {153} (\bibinfo {year} {2014})}\BibitemShut {NoStop}%
	\bibitem [{\citenamefont {Osada}\ \emph {et~al.}(2018)\citenamefont {Osada},
		\citenamefont {Sanaka}, \citenamefont {Munro},\ and\ \citenamefont
		{Nemoto}}]{Osada2018SpatialInteractions}%
	\BibitemOpen
	\bibfield  {author} {\bibinfo {author} {\bibfnamefont {T.}~\bibnamefont
			{Osada}}, \bibinfo {author} {\bibfnamefont {K.}~\bibnamefont {Sanaka}},
		\bibinfo {author} {\bibfnamefont {W.~J.}\ \bibnamefont {Munro}}, \ and\
		\bibinfo {author} {\bibfnamefont {K.}~\bibnamefont {Nemoto}},\ }\href
	{\doibase 10.1103/PhysRevA.97.062319} {\bibfield  {journal} {\bibinfo
			{journal} {Physical Review A}\ }\textbf {\bibinfo {volume} {97}},\ \bibinfo
		{pages} {062319} (\bibinfo {year} {2018})}\BibitemShut {NoStop}%
	\bibitem [{\citenamefont {Childs}\ and\ \citenamefont
		{Ge}(2014)}]{Childs2014SpatialLattices}%
	\BibitemOpen
	\bibfield  {author} {\bibinfo {author} {\bibfnamefont {A.~M.}\ \bibnamefont
			{Childs}}\ and\ \bibinfo {author} {\bibfnamefont {Y.}~\bibnamefont {Ge}},\
	}\href {\doibase 10.1103/PhysRevA.89.052337} {\bibfield  {journal} {\bibinfo
			{journal} {Physical Review A - Atomic, Molecular, and Optical Physics}\
		}\textbf {\bibinfo {volume} {89}},\ \bibinfo {pages} {052337} (\bibinfo
		{year} {2014})}\BibitemShut {NoStop}%
	\bibitem [{\citenamefont {Kim}\ \emph {et~al.}(2009)\citenamefont {Kim},
		\citenamefont {Chang}, \citenamefont {Islam}, \citenamefont {Korenblit},
		\citenamefont {Duan},\ and\ \citenamefont
		{Monroe}}]{Kim2009EntanglementModes}%
	\BibitemOpen
	\bibfield  {author} {\bibinfo {author} {\bibfnamefont {K.}~\bibnamefont
			{Kim}}, \bibinfo {author} {\bibfnamefont {M.~S.}\ \bibnamefont {Chang}},
		\bibinfo {author} {\bibfnamefont {R.}~\bibnamefont {Islam}}, \bibinfo
		{author} {\bibfnamefont {S.}~\bibnamefont {Korenblit}}, \bibinfo {author}
		{\bibfnamefont {L.~M.}\ \bibnamefont {Duan}}, \ and\ \bibinfo {author}
		{\bibfnamefont {C.}~\bibnamefont {Monroe}},\ }\href {\doibase
		10.1103/PhysRevLett.103.120502} {\bibfield  {journal} {\bibinfo  {journal}
			{Physical Review Letters}\ }\textbf {\bibinfo {volume} {103}},\ \bibinfo
		{pages} {120502} (\bibinfo {year} {2009})}\BibitemShut {NoStop}%
	\bibitem [{\citenamefont {Lanyon}\ \emph {et~al.}(2017)\citenamefont {Lanyon},
		\citenamefont {Maier}, \citenamefont {Holz{\"{a}}pfel}, \citenamefont
		{Baumgratz}, \citenamefont {Hempel}, \citenamefont {Jurcevic}, \citenamefont
		{Dhand}, \citenamefont {Buyskikh}, \citenamefont {Daley}, \citenamefont
		{Cramer}, \citenamefont {Plenio}, \citenamefont {Blatt},\ and\ \citenamefont
		{Roos}}]{Lanyon2017EfficientSystem}%
	\BibitemOpen
	\bibfield  {author} {\bibinfo {author} {\bibfnamefont {B.~P.}\ \bibnamefont
			{Lanyon}}, \bibinfo {author} {\bibfnamefont {C.}~\bibnamefont {Maier}},
		\bibinfo {author} {\bibfnamefont {M.}~\bibnamefont {Holz{\"{a}}pfel}},
		\bibinfo {author} {\bibfnamefont {T.}~\bibnamefont {Baumgratz}}, \bibinfo
		{author} {\bibfnamefont {C.}~\bibnamefont {Hempel}}, \bibinfo {author}
		{\bibfnamefont {P.}~\bibnamefont {Jurcevic}}, \bibinfo {author}
		{\bibfnamefont {I.}~\bibnamefont {Dhand}}, \bibinfo {author} {\bibfnamefont
			{A.~S.}\ \bibnamefont {Buyskikh}}, \bibinfo {author} {\bibfnamefont {A.~J.}\
			\bibnamefont {Daley}}, \bibinfo {author} {\bibfnamefont {M.}~\bibnamefont
			{Cramer}}, \bibinfo {author} {\bibfnamefont {M.~B.}\ \bibnamefont {Plenio}},
		\bibinfo {author} {\bibfnamefont {R.}~\bibnamefont {Blatt}}, \ and\ \bibinfo
		{author} {\bibfnamefont {C.~F.}\ \bibnamefont {Roos}},\ }\href {\doibase
		10.1038/nphys4244} {\bibfield  {journal} {\bibinfo  {journal} {Nature
				Physics}\ }\textbf {\bibinfo {volume} {13}},\ \bibinfo {pages} {1158}
		(\bibinfo {year} {2017})}\BibitemShut {NoStop}%
	\bibitem [{\citenamefont {Richerme}\ \emph {et~al.}(2014)\citenamefont
		{Richerme}, \citenamefont {Gong}, \citenamefont {Lee}, \citenamefont {Senko},
		\citenamefont {Smith}, \citenamefont {Foss-Feig}, \citenamefont {Michalakis},
		\citenamefont {Gorshkov},\ and\ \citenamefont
		{Monroe}}]{Richerme2014Non-localInteractions}%
	\BibitemOpen
	\bibfield  {author} {\bibinfo {author} {\bibfnamefont {P.}~\bibnamefont
			{Richerme}}, \bibinfo {author} {\bibfnamefont {Z.~X.}\ \bibnamefont {Gong}},
		\bibinfo {author} {\bibfnamefont {A.}~\bibnamefont {Lee}}, \bibinfo {author}
		{\bibfnamefont {C.}~\bibnamefont {Senko}}, \bibinfo {author} {\bibfnamefont
			{J.}~\bibnamefont {Smith}}, \bibinfo {author} {\bibfnamefont
			{M.}~\bibnamefont {Foss-Feig}}, \bibinfo {author} {\bibfnamefont
			{S.}~\bibnamefont {Michalakis}}, \bibinfo {author} {\bibfnamefont {A.~V.}\
			\bibnamefont {Gorshkov}}, \ and\ \bibinfo {author} {\bibfnamefont
			{C.}~\bibnamefont {Monroe}},\ }\href {\doibase 10.1038/nature13450}
	{\bibfield  {journal} {\bibinfo  {journal} {Nature}\ }\textbf {\bibinfo
			{volume} {511}},\ \bibinfo {pages} {198} (\bibinfo {year}
		{2014})}\BibitemShut {NoStop}%
	\bibitem [{\citenamefont {Friis}\ \emph
		{et~al.}(2018{\natexlab{a}})\citenamefont {Friis}, \citenamefont {Marty},
		\citenamefont {Maier}, \citenamefont {Hempel}, \citenamefont
		{Holz{\"{a}}pfel}, \citenamefont {Jurcevic}, \citenamefont {Plenio},
		\citenamefont {Huber}, \citenamefont {Roos}, \citenamefont {Blatt},\ and\
		\citenamefont {Lanyon}}]{Friis2018ObservationSystem}%
	\BibitemOpen
	\bibfield  {author} {\bibinfo {author} {\bibfnamefont {N.}~\bibnamefont
			{Friis}}, \bibinfo {author} {\bibfnamefont {O.}~\bibnamefont {Marty}},
		\bibinfo {author} {\bibfnamefont {C.}~\bibnamefont {Maier}}, \bibinfo
		{author} {\bibfnamefont {C.}~\bibnamefont {Hempel}}, \bibinfo {author}
		{\bibfnamefont {M.}~\bibnamefont {Holz{\"{a}}pfel}}, \bibinfo {author}
		{\bibfnamefont {P.}~\bibnamefont {Jurcevic}}, \bibinfo {author}
		{\bibfnamefont {M.~B.}\ \bibnamefont {Plenio}}, \bibinfo {author}
		{\bibfnamefont {M.}~\bibnamefont {Huber}}, \bibinfo {author} {\bibfnamefont
			{C.}~\bibnamefont {Roos}}, \bibinfo {author} {\bibfnamefont {R.}~\bibnamefont
			{Blatt}}, \ and\ \bibinfo {author} {\bibfnamefont {B.}~\bibnamefont
			{Lanyon}},\ }\href {\doibase 10.1103/PhysRevX.8.021012} {\bibfield  {journal}
		{\bibinfo  {journal} {Physical Review X}\ }\textbf {\bibinfo {volume} {8}},\
		\bibinfo {pages} {021012} (\bibinfo {year} {2018}{\natexlab{a}})}\BibitemShut
	{NoStop}%
	\bibitem [{\citenamefont {Micheli}\ \emph {et~al.}(2006)\citenamefont
		{Micheli}, \citenamefont {Brennen},\ and\ \citenamefont
		{Zoller}}]{Micheli2006AMolecules}%
	\BibitemOpen
	\bibfield  {author} {\bibinfo {author} {\bibfnamefont {A.}~\bibnamefont
			{Micheli}}, \bibinfo {author} {\bibfnamefont {G.~K.}\ \bibnamefont
			{Brennen}}, \ and\ \bibinfo {author} {\bibfnamefont {P.}~\bibnamefont
			{Zoller}},\ }\href {\doibase 10.1038/nphys287} {\bibfield  {journal}
		{\bibinfo  {journal} {Nature Physics}\ }\textbf {\bibinfo {volume} {2}},\
		\bibinfo {pages} {341} (\bibinfo {year} {2006})}\BibitemShut {NoStop}%
	\bibitem [{\citenamefont {Yan}\ \emph {et~al.}(2013)\citenamefont {Yan},
		\citenamefont {Moses}, \citenamefont {Gadway}, \citenamefont {Covey},
		\citenamefont {Hazzard}, \citenamefont {Rey}, \citenamefont {Jin},\ and\
		\citenamefont {Ye}}]{Yan2013ObservationMolecules}%
	\BibitemOpen
	\bibfield  {author} {\bibinfo {author} {\bibfnamefont {B.}~\bibnamefont
			{Yan}}, \bibinfo {author} {\bibfnamefont {S.~A.}\ \bibnamefont {Moses}},
		\bibinfo {author} {\bibfnamefont {B.}~\bibnamefont {Gadway}}, \bibinfo
		{author} {\bibfnamefont {J.~P.}\ \bibnamefont {Covey}}, \bibinfo {author}
		{\bibfnamefont {K.~R.}\ \bibnamefont {Hazzard}}, \bibinfo {author}
		{\bibfnamefont {A.~M.}\ \bibnamefont {Rey}}, \bibinfo {author} {\bibfnamefont
			{D.~S.}\ \bibnamefont {Jin}}, \ and\ \bibinfo {author} {\bibfnamefont
			{J.}~\bibnamefont {Ye}},\ }\href {\doibase 10.1038/nature12483} {\bibfield
		{journal} {\bibinfo  {journal} {Nature}\ }\textbf {\bibinfo {volume} {501}},\
		\bibinfo {pages} {521} (\bibinfo {year} {2013})}\BibitemShut {NoStop}%
	\bibitem [{\citenamefont {Browaeys}\ and\ \citenamefont
		{Lahaye}(2020)}]{Browaeys2020Many-bodyAtoms}%
	\BibitemOpen
	\bibfield  {author} {\bibinfo {author} {\bibfnamefont {A.}~\bibnamefont
			{Browaeys}}\ and\ \bibinfo {author} {\bibfnamefont {T.}~\bibnamefont
			{Lahaye}},\ }\href {\doibase 10.1038/s41567-019-0733-z} {\enquote {\bibinfo
			{title} {{Many-body physics with individually controlled Rydberg atoms}},}\ }
	(\bibinfo {year} {2020})\BibitemShut {NoStop}%
	\bibitem [{\citenamefont {Chen}\ and\ \citenamefont
		{Lucas}(2019)}]{Chen2019FiniteInteractions}%
	\BibitemOpen
	\bibfield  {author} {\bibinfo {author} {\bibfnamefont {C.~F.}\ \bibnamefont
			{Chen}}\ and\ \bibinfo {author} {\bibfnamefont {A.}~\bibnamefont {Lucas}},\
	}\href {\doibase 10.1103/PhysRevLett.123.250605} {\bibfield  {journal}
		{\bibinfo  {journal} {Physical Review Letters}\ }\textbf {\bibinfo {volume}
			{123}},\ \bibinfo {pages} {250605} (\bibinfo {year} {2019})}\BibitemShut
	{NoStop}%
	\bibitem [{\citenamefont {Zhang}\ \emph {et~al.}(2017)\citenamefont {Zhang},
		\citenamefont {Pagano}, \citenamefont {Hess}, \citenamefont {Kyprianidis},
		\citenamefont {Becker}, \citenamefont {Kaplan}, \citenamefont {Gorshkov},
		\citenamefont {Gong},\ and\ \citenamefont
		{Monroe}}]{Zhang2017ObservationSimulator}%
	\BibitemOpen
	\bibfield  {author} {\bibinfo {author} {\bibfnamefont {J.}~\bibnamefont
			{Zhang}}, \bibinfo {author} {\bibfnamefont {G.}~\bibnamefont {Pagano}},
		\bibinfo {author} {\bibfnamefont {P.~W.}\ \bibnamefont {Hess}}, \bibinfo
		{author} {\bibfnamefont {A.}~\bibnamefont {Kyprianidis}}, \bibinfo {author}
		{\bibfnamefont {P.}~\bibnamefont {Becker}}, \bibinfo {author} {\bibfnamefont
			{H.}~\bibnamefont {Kaplan}}, \bibinfo {author} {\bibfnamefont {A.~V.}\
			\bibnamefont {Gorshkov}}, \bibinfo {author} {\bibfnamefont {Z.~X.}\
			\bibnamefont {Gong}}, \ and\ \bibinfo {author} {\bibfnamefont
			{C.}~\bibnamefont {Monroe}},\ }\href {\doibase 10.1038/nature24654}
	{\bibfield  {journal} {\bibinfo  {journal} {Nature}\ }\textbf {\bibinfo
			{volume} {551}},\ \bibinfo {pages} {601} (\bibinfo {year}
		{2017})}\BibitemShut {NoStop}%
	\bibitem [{\citenamefont {Eldredge}\ \emph {et~al.}(2017)\citenamefont
		{Eldredge}, \citenamefont {Gong}, \citenamefont {Young}, \citenamefont
		{Moosavian}, \citenamefont {Foss-Feig},\ and\ \citenamefont
		{Gorshkov}}]{Eldredge2017FastInteractions}%
	\BibitemOpen
	\bibfield  {author} {\bibinfo {author} {\bibfnamefont {Z.}~\bibnamefont
			{Eldredge}}, \bibinfo {author} {\bibfnamefont {Z.~X.}\ \bibnamefont {Gong}},
		\bibinfo {author} {\bibfnamefont {J.~T.}\ \bibnamefont {Young}}, \bibinfo
		{author} {\bibfnamefont {A.~H.}\ \bibnamefont {Moosavian}}, \bibinfo {author}
		{\bibfnamefont {M.}~\bibnamefont {Foss-Feig}}, \ and\ \bibinfo {author}
		{\bibfnamefont {A.~V.}\ \bibnamefont {Gorshkov}},\ }\href {\doibase
		10.1103/PhysRevLett.119.170503} {\bibfield  {journal} {\bibinfo  {journal}
			{Physical Review Letters}\ }\textbf {\bibinfo {volume} {119}},\ \bibinfo
		{pages} {170503} (\bibinfo {year} {2017})}\BibitemShut {NoStop}%
	\bibitem [{\citenamefont {Tran}\ \emph
		{et~al.}(2020{\natexlab{a}})\citenamefont {Tran}, \citenamefont {Deshpande},
		\citenamefont {Guo}, \citenamefont {Lucas},\ and\ \citenamefont
		{Gorshkov}}]{Tran2020OptimalSystems}%
	\BibitemOpen
	\bibfield  {author} {\bibinfo {author} {\bibfnamefont {M.~C.}\ \bibnamefont
			{Tran}}, \bibinfo {author} {\bibfnamefont {A.}~\bibnamefont {Deshpande}},
		\bibinfo {author} {\bibfnamefont {A.~Y.}\ \bibnamefont {Guo}}, \bibinfo
		{author} {\bibfnamefont {A.}~\bibnamefont {Lucas}}, \ and\ \bibinfo {author}
		{\bibfnamefont {A.~V.}\ \bibnamefont {Gorshkov}},\ }\href
	{http://arxiv.org/abs/2010.02930} {\  (\bibinfo {year}
		{2020}{\natexlab{a}})}\BibitemShut {NoStop}%
	\bibitem [{\citenamefont {Farhi}\ and\ \citenamefont
		{Gutmann}(1998{\natexlab{b}})}]{Farhi1998}%
	\BibitemOpen
	\bibfield  {author} {\bibinfo {author} {\bibfnamefont {E.}~\bibnamefont
			{Farhi}}\ and\ \bibinfo {author} {\bibfnamefont {S.}~\bibnamefont
			{Gutmann}},\ }\href {\doibase 10.1103/PhysRevA.57.2403} {\bibfield  {journal}
		{\bibinfo  {journal} {Physical Review A - Atomic, Molecular, and Optical
				Physics}\ }\textbf {\bibinfo {volume} {57}},\ \bibinfo {pages} {2403}
		(\bibinfo {year} {1998}{\natexlab{b}})}\BibitemShut {NoStop}%
	\bibitem [{\citenamefont {Maghrebi}\ \emph {et~al.}(2015)\citenamefont
		{Maghrebi}, \citenamefont {Gong},\ and\ \citenamefont
		{Gorshkov}}]{Maghrebi2015ContinuousSystems}%
	\BibitemOpen
	\bibfield  {author} {\bibinfo {author} {\bibfnamefont {M.~F.}\ \bibnamefont
			{Maghrebi}}, \bibinfo {author} {\bibfnamefont {Z.-X.}\ \bibnamefont {Gong}},
		\ and\ \bibinfo {author} {\bibfnamefont {A.~V.}\ \bibnamefont {Gorshkov}},\
	}\href {\doibase 10.1103/PhysRevLett.119.023001} {\bibfield  {journal}
		{\bibinfo  {journal} {Physical Review Letters}\ }\textbf {\bibinfo {volume}
			{119}} (\bibinfo {year} {2015}),\ 10.1103/PhysRevLett.119.023001}\BibitemShut
	{NoStop}%
	\bibitem [{\citenamefont {Byrnes}\ \emph {et~al.}(2018)\citenamefont {Byrnes},
		\citenamefont {Forster},\ and\ \citenamefont
		{Tessler}}]{Byrnes2018GeneralizedStates}%
	\BibitemOpen
	\bibfield  {author} {\bibinfo {author} {\bibfnamefont {T.}~\bibnamefont
			{Byrnes}}, \bibinfo {author} {\bibfnamefont {G.}~\bibnamefont {Forster}}, \
		and\ \bibinfo {author} {\bibfnamefont {L.}~\bibnamefont {Tessler}},\ }\href
	{\doibase 10.1103/PhysRevLett.120.060501} {\bibfield  {journal} {\bibinfo
			{journal} {Physical Review Letters}\ }\textbf {\bibinfo {volume} {120}}
		(\bibinfo {year} {2018}),\ 10.1103/PhysRevLett.120.060501}\BibitemShut
	{NoStop}%
	\bibitem [{\citenamefont {Cafaro}\ and\ \citenamefont
		{Alsing}(2019)}]{Cafaro2019Continuous-timeSystems}%
	\BibitemOpen
	\bibfield  {author} {\bibinfo {author} {\bibfnamefont {C.}~\bibnamefont
			{Cafaro}}\ and\ \bibinfo {author} {\bibfnamefont {P.~M.}\ \bibnamefont
			{Alsing}},\ }\href {\doibase 10.1142/S0219749919500254} {\bibfield  {journal}
		{\bibinfo  {journal} {International Journal of Quantum Information}\ }\textbf
		{\bibinfo {volume} {17}} (\bibinfo {year} {2019}),\
		10.1142/S0219749919500254}\BibitemShut {NoStop}%
	\bibitem [{\citenamefont {Lieb}\ and\ \citenamefont
		{Robinson}(1972)}]{Lieb1972TheSystems}%
	\BibitemOpen
	\bibfield  {author} {\bibinfo {author} {\bibfnamefont {E.~H.}\ \bibnamefont
			{Lieb}}\ and\ \bibinfo {author} {\bibfnamefont {D.~W.}\ \bibnamefont
			{Robinson}},\ }\href {\doibase 10.1007/BF01645779} {\bibfield  {journal}
		{\bibinfo  {journal} {Communications in Mathematical Physics}\ }\textbf
		{\bibinfo {volume} {28}},\ \bibinfo {pages} {251} (\bibinfo {year}
		{1972})}\BibitemShut {NoStop}%
	\bibitem [{\citenamefont {Tran}\ \emph {et~al.}(2018)\citenamefont {Tran},
		\citenamefont {Guo}, \citenamefont {Su}, \citenamefont {Garrison},
		\citenamefont {Eldredge}, \citenamefont {Foss-Feig}, \citenamefont {Childs},\
		and\ \citenamefont {Gorshkov}}]{Tran2018LocalityInteractions}%
	\BibitemOpen
	\bibfield  {author} {\bibinfo {author} {\bibfnamefont {M.~C.}\ \bibnamefont
			{Tran}}, \bibinfo {author} {\bibfnamefont {A.~Y.}\ \bibnamefont {Guo}},
		\bibinfo {author} {\bibfnamefont {Y.}~\bibnamefont {Su}}, \bibinfo {author}
		{\bibfnamefont {J.~R.}\ \bibnamefont {Garrison}}, \bibinfo {author}
		{\bibfnamefont {Z.}~\bibnamefont {Eldredge}}, \bibinfo {author}
		{\bibfnamefont {M.}~\bibnamefont {Foss-Feig}}, \bibinfo {author}
		{\bibfnamefont {A.~M.}\ \bibnamefont {Childs}}, \ and\ \bibinfo {author}
		{\bibfnamefont {A.~V.}\ \bibnamefont {Gorshkov}},\ }\href {\doibase
		10.1103/PhysRevX.9.031006} {\bibfield  {journal} {\bibinfo  {journal}
			{Physical Review X}\ }\textbf {\bibinfo {volume} {9}} (\bibinfo {year}
		{2018}),\ 10.1103/PhysRevX.9.031006}\BibitemShut {NoStop}%
	\bibitem [{\citenamefont {Kuwahara}\ and\ \citenamefont
		{Saito}(2020)}]{Kuwahara2020StrictlyDimensions}%
	\BibitemOpen
	\bibfield  {author} {\bibinfo {author} {\bibfnamefont {T.}~\bibnamefont
			{Kuwahara}}\ and\ \bibinfo {author} {\bibfnamefont {K.}~\bibnamefont
			{Saito}},\ }\href {\doibase 10.1103/PhysRevX.10.031010} {\bibfield  {journal}
		{\bibinfo  {journal} {Physical Review X}\ }\textbf {\bibinfo {volume} {10}},\
		\bibinfo {pages} {031010} (\bibinfo {year} {2020})}\BibitemShut {NoStop}%
	\bibitem [{\citenamefont {Tran}\ \emph
		{et~al.}(2020{\natexlab{b}})\citenamefont {Tran}, \citenamefont {Chen},
		\citenamefont {Ehrenberg}, \citenamefont {Guo}, \citenamefont {Deshpande},
		\citenamefont {Hong}, \citenamefont {Gong}, \citenamefont {Gorshkov},\ and\
		\citenamefont {Lucas}}]{Tran2020HierarchyInteractions}%
	\BibitemOpen
	\bibfield  {author} {\bibinfo {author} {\bibfnamefont {M.~C.}\ \bibnamefont
			{Tran}}, \bibinfo {author} {\bibfnamefont {C.-F.}\ \bibnamefont {Chen}},
		\bibinfo {author} {\bibfnamefont {A.}~\bibnamefont {Ehrenberg}}, \bibinfo
		{author} {\bibfnamefont {A.~Y.}\ \bibnamefont {Guo}}, \bibinfo {author}
		{\bibfnamefont {A.}~\bibnamefont {Deshpande}}, \bibinfo {author}
		{\bibfnamefont {Y.}~\bibnamefont {Hong}}, \bibinfo {author} {\bibfnamefont
			{Z.-X.}\ \bibnamefont {Gong}}, \bibinfo {author} {\bibfnamefont {A.~V.}\
			\bibnamefont {Gorshkov}}, \ and\ \bibinfo {author} {\bibfnamefont
			{A.}~\bibnamefont {Lucas}},\ }\href {\doibase 10.1103/PhysRevX.10.031009}
	{\bibfield  {journal} {\bibinfo  {journal} {Physical Review X}\ }\textbf
		{\bibinfo {volume} {10}} (\bibinfo {year} {2020}{\natexlab{b}}),\
		10.1103/PhysRevX.10.031009}\BibitemShut {NoStop}%
	\bibitem [{\citenamefont {Piltz}\ \emph {et~al.}(2016)\citenamefont {Piltz},
		\citenamefont {Sriarunothai}, \citenamefont {Ivanov}, \citenamefont
		{W{\"{o}}lk},\ and\ \citenamefont
		{Wunderlich}}]{Piltz2016VersatileProcessing}%
	\BibitemOpen
	\bibfield  {author} {\bibinfo {author} {\bibfnamefont {C.}~\bibnamefont
			{Piltz}}, \bibinfo {author} {\bibfnamefont {T.}~\bibnamefont {Sriarunothai}},
		\bibinfo {author} {\bibfnamefont {S.~S.}\ \bibnamefont {Ivanov}}, \bibinfo
		{author} {\bibfnamefont {S.}~\bibnamefont {W{\"{o}}lk}}, \ and\ \bibinfo
		{author} {\bibfnamefont {C.}~\bibnamefont {Wunderlich}},\ }\href {\doibase
		10.1126/sciadv.1600093} {\bibfield  {journal} {\bibinfo  {journal} {Science
				Advances}\ }\textbf {\bibinfo {volume} {2}},\ \bibinfo {pages} {e1600093}
		(\bibinfo {year} {2016})}\BibitemShut {NoStop}%
	\bibitem [{\citenamefont {Pitsios}\ \emph {et~al.}(2017)\citenamefont
		{Pitsios}, \citenamefont {Banchi}, \citenamefont {Rab}, \citenamefont
		{Bentivegna}, \citenamefont {Caprara}, \citenamefont {Crespi}, \citenamefont
		{Spagnolo}, \citenamefont {Bose}, \citenamefont {Mataloni}, \citenamefont
		{Osellame},\ and\ \citenamefont {Sciarrino}}]{Pitsios2017PhotonicQuench}%
	\BibitemOpen
	\bibfield  {author} {\bibinfo {author} {\bibfnamefont {I.}~\bibnamefont
			{Pitsios}}, \bibinfo {author} {\bibfnamefont {L.}~\bibnamefont {Banchi}},
		\bibinfo {author} {\bibfnamefont {A.~S.}\ \bibnamefont {Rab}}, \bibinfo
		{author} {\bibfnamefont {M.}~\bibnamefont {Bentivegna}}, \bibinfo {author}
		{\bibfnamefont {D.}~\bibnamefont {Caprara}}, \bibinfo {author} {\bibfnamefont
			{A.}~\bibnamefont {Crespi}}, \bibinfo {author} {\bibfnamefont
			{N.}~\bibnamefont {Spagnolo}}, \bibinfo {author} {\bibfnamefont
			{S.}~\bibnamefont {Bose}}, \bibinfo {author} {\bibfnamefont {P.}~\bibnamefont
			{Mataloni}}, \bibinfo {author} {\bibfnamefont {R.}~\bibnamefont {Osellame}},
		\ and\ \bibinfo {author} {\bibfnamefont {F.}~\bibnamefont {Sciarrino}},\
	}\href {\doibase 10.1038/s41467-017-01589-y} {\bibfield  {journal} {\bibinfo
			{journal} {Nature Communications}\ }\textbf {\bibinfo {volume} {8}},\
		\bibinfo {pages} {1} (\bibinfo {year} {2017})}\BibitemShut {NoStop}%
	\bibitem [{\citenamefont {Friis}\ \emph
		{et~al.}(2018{\natexlab{b}})\citenamefont {Friis}, \citenamefont {Marty},
		\citenamefont {Maier}, \citenamefont {Hempel}, \citenamefont
		{Holz{\"{a}}pfel}, \citenamefont {Jurcevic}, \citenamefont {Plenio},
		\citenamefont {Huber}, \citenamefont {Roos}, \citenamefont {Blatt},\ and\
		\citenamefont {Lanyon}}]{Friis2018}%
	\BibitemOpen
	\bibfield  {author} {\bibinfo {author} {\bibfnamefont {N.}~\bibnamefont
			{Friis}}, \bibinfo {author} {\bibfnamefont {O.}~\bibnamefont {Marty}},
		\bibinfo {author} {\bibfnamefont {C.}~\bibnamefont {Maier}}, \bibinfo
		{author} {\bibfnamefont {C.}~\bibnamefont {Hempel}}, \bibinfo {author}
		{\bibfnamefont {M.}~\bibnamefont {Holz{\"{a}}pfel}}, \bibinfo {author}
		{\bibfnamefont {P.}~\bibnamefont {Jurcevic}}, \bibinfo {author}
		{\bibfnamefont {M.~B.}\ \bibnamefont {Plenio}}, \bibinfo {author}
		{\bibfnamefont {M.}~\bibnamefont {Huber}}, \bibinfo {author} {\bibfnamefont
			{C.}~\bibnamefont {Roos}}, \bibinfo {author} {\bibfnamefont {R.}~\bibnamefont
			{Blatt}}, \ and\ \bibinfo {author} {\bibfnamefont {B.}~\bibnamefont
			{Lanyon}},\ }\href {\doibase 10.1103/PhysRevX.8.021012} {\bibfield  {journal}
		{\bibinfo  {journal} {Physical Review X}\ }\textbf {\bibinfo {volume} {8}},\
		\bibinfo {pages} {021012} (\bibinfo {year} {2018}{\natexlab{b}})}\BibitemShut
	{NoStop}%
	\bibitem [{\citenamefont {Pagano}\ \emph {et~al.}(2019)\citenamefont {Pagano},
		\citenamefont {Bapat}, \citenamefont {Becker}, \citenamefont {Collins},
		\citenamefont {De}, \citenamefont {Hess}, \citenamefont {Kaplan},
		\citenamefont {Kyprianidis}, \citenamefont {Tan}, \citenamefont {Baldwin},
		\citenamefont {Brady}, \citenamefont {Deshpande}, \citenamefont {Liu},
		\citenamefont {Jordan}, \citenamefont {Gorshkov},\ and\ \citenamefont
		{Monroe}}]{Pagano2019QuantumSimulator}%
	\BibitemOpen
	\bibfield  {author} {\bibinfo {author} {\bibfnamefont {G.}~\bibnamefont
			{Pagano}}, \bibinfo {author} {\bibfnamefont {A.}~\bibnamefont {Bapat}},
		\bibinfo {author} {\bibfnamefont {P.}~\bibnamefont {Becker}}, \bibinfo
		{author} {\bibfnamefont {K.~S.}\ \bibnamefont {Collins}}, \bibinfo {author}
		{\bibfnamefont {A.}~\bibnamefont {De}}, \bibinfo {author} {\bibfnamefont
			{P.~W.}\ \bibnamefont {Hess}}, \bibinfo {author} {\bibfnamefont {H.~B.}\
			\bibnamefont {Kaplan}}, \bibinfo {author} {\bibfnamefont {A.}~\bibnamefont
			{Kyprianidis}}, \bibinfo {author} {\bibfnamefont {W.~L.}\ \bibnamefont
			{Tan}}, \bibinfo {author} {\bibfnamefont {C.}~\bibnamefont {Baldwin}},
		\bibinfo {author} {\bibfnamefont {L.~T.}\ \bibnamefont {Brady}}, \bibinfo
		{author} {\bibfnamefont {A.}~\bibnamefont {Deshpande}}, \bibinfo {author}
		{\bibfnamefont {F.}~\bibnamefont {Liu}}, \bibinfo {author} {\bibfnamefont
			{S.}~\bibnamefont {Jordan}}, \bibinfo {author} {\bibfnamefont {A.~V.}\
			\bibnamefont {Gorshkov}}, \ and\ \bibinfo {author} {\bibfnamefont
			{C.}~\bibnamefont {Monroe}},\ }\href {\doibase 10.1073/pnas.2006373117}
	{\bibfield  {journal} {\bibinfo  {journal} {Proceedings of the National
				Academy of Sciences of the United States of America}\ }\textbf {\bibinfo
			{volume} {117}},\ \bibinfo {pages} {25396} (\bibinfo {year}
		{2019})}\BibitemShut {NoStop}%
	\bibitem [{NIS(2020)}]{NIST:DLMF}%
	\BibitemOpen
	\href {https://dlmf.nist.gov/} {\enquote {\bibinfo {title} {{NIST Digital
					Library of Mathematical Functions}},}\ }\bibinfo {howpublished}
	{https://dlmf.nist.gov/ Release 1.1.0 of 2020-12-15} (\bibinfo {year}
	{2020})\BibitemShut {NoStop}%
	\bibitem [{\citenamefont {Nemes}(2017)}]{Nemes2017ErrorFunction}%
	\BibitemOpen
	\bibfield  {author} {\bibinfo {author} {\bibfnamefont {G.}~\bibnamefont
			{Nemes}},\ }\href {\doibase 10.1098/rspa.2017.0363} {\bibfield  {journal}
		{\bibinfo  {journal} {Proceedings of the Royal Society A: Mathematical,
				Physical and Engineering Sciences}\ }\textbf {\bibinfo {volume} {473}},\
		\bibinfo {pages} {20170363} (\bibinfo {year} {2017})}\BibitemShut {NoStop}%
	\bibitem [{\citenamefont {Lee}(2014)}]{Lee2014Zeros1/2}%
	\BibitemOpen
	\bibfield  {author} {\bibinfo {author} {\bibfnamefont {Y.}~\bibnamefont
			{Lee}},\ }\href {\doibase 10.1016/j.jnt.2013.07.006} {\bibfield  {journal}
		{\bibinfo  {journal} {Journal of Number Theory}\ }\textbf {\bibinfo {volume}
			{134}},\ \bibinfo {pages} {38} (\bibinfo {year} {2014})}\BibitemShut
	{NoStop}%
\end{thebibliography}
\end{document}